 \newcommand{\bq}{\begin{equation}}
 \newcommand{\eq}{\end{equation}}
 \newcommand{\bqn}{\begin{eqnarray}}
 \newcommand{\eqn}{\end{eqnarray}}
\renewcommand{\P}{ \mathbb P }
\newcommand{\EXP}{\mathbb{E}}
\newcommand{\ud}[1]{\, \mathrm{d}#1}
\newcommand{\deriv}[3][]{\frac{\ud^{#1} \hspace{-0.3mm} #2}{\ud{#3}^{#1}}}
\newcommand{\R}{\mathbb{R}}
 \newtheorem{Thm}{Theorem}
\newtheorem{Prop}[Thm]{Proposition}
\title{The Stochastic Gause predator-prey model: noise-induced extinctions and invariance}
\author{
  \small Leon Alexander Valencia, Ph.D\\
  \small Jorge Mario Ramirez osorio, Ph.D\\
  \small Jorge Andrés Sánchez Arteaga, Ph.D\\
  %\small ............
  \date{}
}
\begin{document}
%% Hay que decirle que incluya el título en el documento
\maketitle
%% Aquí podemos añadir un resumen del trabajo (o del artículo en su caso) 

\begin{abstract}
This paper explores a stochastic Gause predator-prey model with bounded or sub-linear functional response. The model, described by a system of stochastic differential equations, captures the influence of stochastic fluctuations on predator-prey dynamics, with particular focus on the stability, extinction, and persistence of populations. We provide sufficient conditions for the existence and boundedness of solutions, analyze noise-induced extinction events, and investigate the existence of unique stationary distributions for the case of Holing Type I functional response. Our analysis highlights the critical role of noise in determining long-term ecological outcomes, demonstrating that even in cases where deterministic models predict stable coexistence, stochastic noise can drive populations to extinction or alter the system's dynamics significantly. 

%This work offers sufficient conditions for the permanence and extinction of a Leslie-type stochastic model with functional response $f(x)$. We will also give conditions under which the model has a unique stationary distribution. Finally, we will study the case for which $f(x)=x.$
\end{abstract}

%% Iniciamos "secciones" que servirán como subtítulos
%% Nota que hay otra manera de añadir acentos
\section{Introduction}
%intro (propuesta de Andrés)

The study of predator-prey dynamics has been an area of interest in mathematical ecology for decades. These models provide insight into species interactions in various environments and have been widely used to explore phenomena such as population coexistence and extinction. In particular, stochastic models have gained importance in recent years due to their ability to capture the inherent uncertainty in ecological and environmental processes \citep{mao2002environmental, abbott2017alternative}.

A classical approach to modeling these interactions is through deterministic ordinary differential equations (ODEs), such as those presented by Lotka and Volterra. However, in real-world scenarios, random fluctuations play a fundamental role, leading to the development of stochastic models that incorporate environmental and demographic noise. Khasminskii's work \cite{khasminskii2012stability} provides a solid foundation for stochastic stability in dynamic systems, while other authors like Kot \cite{kot2001elements} and Murray \cite{murray2002mathematical} have significantly contributed to the theory of ecological models.

Within this context, stochastic differential equations (SDEs) have proven to be useful tools for describing the behavior of populations in uncertain environments. For example, studies on the stability of population systems in the presence of environmental noise have been fundamental in understanding how noise can prevent or cause species extinction \cite{mao2011stationary, zhou2020persistence}. The incorporation of Brownian-type noise in population models has revealed that stochastic fluctuations can stabilize or destabilize systems, depending on initial conditions and environmental characteristics \cite{liu2016dynamics, liu2011survival}.

In this article, we study a stochastic 
 Gause-type predator-prey model describing the evolution in time $t \geq 0$ of the populations of prey $x(t)$ and predator $y(t)$ via the  following system of stochastic differential equations
\begin{equation}\label{eq_DimStochGause}
\begin{split}
\ud x & = \left(rx\left(1-\dfrac{x}{K}\right)- f(x)y\right)\ud t+ x s_x \ud B_{x}, \\
\ud y & = \left(bf(x)y-m y\right)\ud t+y s_{y} \ud B_{y},
\end{split}
\end{equation}
where $f$ is a non-negative function called the \textit{functional response} of the predator to the prey. It describes the dependence between the predator's behavior and the availability of prey \cite{mao2007stochastic,li2021permanence,zhou2020persistence}, an plays a crucial role in the dynamics of the system, both in the deterministic and stochastic models. Here, we consider functional responses of various types, the main ones being of Holling types I to IV as listed in \cref{tab_FRadim}. 

The model parameters in \cref{eq_DimStochGause} are all assumed positive and interpreted as follows: $r$ is the intrinsic growth rate of the prey, $K$ its carriying capacity, $b$ is the interaction coefficient of the predator, and $m$ its mortality rate. See \cite{volterra1926fluctuations,murray2002continuous,brauer2012mathematical}. The diffusion terms in \cref{eq_DimStochGause} are driven by standard, independent Wiener processes $B_x, B_y$ and have diffusion coefficients that depend linearly on $x$ and $y$ with intensities $s_x,s_y>0$. This additive-noise terms model the situation in which, for example, there is time-dependent uncertainty in the parameters $r$ and $m$. See \cite{mao2002environmental}

One of the key challenges in stochastic predator-prey models is determining the conditions under which populations can persist in the long term. This type of analysis has been approached from multiple perspectives, including Markov process theory \cite{meyn1993stability, bharucha1997elements}. For instance, Foster-Lyapunov criteria have been used to evaluate the stability of continuous processes and to determine extinction probabilities in stochastic ecological systems \cite{meyn1993stability}. The research by Liu and collaborators \cite{liu2016dynamics} has shown how Holling type II and type III functional response models affect the stability and permanence of populations in fluctuating ecosystems.

In addition, recent studies have explored the long-term behavior of predator-prey systems under stochastic influences. The work of Li and Guo \cite{li2021permanence} focuses on the permanence of stochastic models with general functional responses, while Xu et al. \cite{xu2020analysis} investigate more complex systems with modifications to Leslie-Gower and Holling type IV schemes. These investigations have demonstrated that incorporating more complex behaviors, such as group defense and nonlinear responses to prey population growth, can significantly impact system dynamics \cite{pei2022extinction}.

We provide sufficient conditions for existence and boundedness of solutions to \cref{eq_DimStochGause}, extinction of prey and/or predator, and existence of a unique stationary distribution. Except for our result on invariant distributions which holds only for Type I Holling functional responses, our results do not assume a specific form of $f$. Rather, we always assume that $f$ in \cref{eq_DimStochGause} is non-negative, satisfies $f(0) = 0$ and is either bounded or grows sub-linearly. Our techniques rely mostly on the theory 
of stability for stochastic differential equations,  specifically on applications of the comparison theorem and construction of appropriate Lyapunov methods; the book of \cite{khasminskii2012stability} being a good source for this material.

The organization of the this paper is as follows. In \cref{sec:prel} we review the structural properties of the deterministic Gause model and introduce the stochastic model in non-dimensional variables. Subsequently, \cref{sec:prel} contains preliminary results that ensure the well-posedeness of the stochastic problem, which in this context includes existence, uniqueness, positivity and boundedness of the solutions. The analysis then explores noise-induced extinction phenomena, demonstrating that certain noise intensities can lead to the eventual extinction of one or both species, even in situations where the deterministic model predicts coexistence. Finally, we provide conditions for the existence of an invariant distribution in the case of a type I functional response, ensuring the persistence of populations under certain parameter configurations of the model.

The results obtained underscore the significant impact that uncertainty, modeled as stochastic noise, can have on the dynamics of ecological systems, altering behaviors predicted by deterministic models and leading to extinction or coexistence scenarios that depend on the level of noise

% The functional response \(f(x)\) plays a crucial role in the dynamics of the system. In this study, we consider different types of functional responses:

% 1. \textbf{Type I Functional Response (Linear)}:
%    \[
%    f(x) = c x,
%    \]
%    where \(c\) is a constant, representing a linear increase in predation rate with prey density.

% 2. \textbf{Type II Functional Response (Holling's Type II)}:
%    \[
%    f(x) = \frac{c x}{1 + \omega x},
%    \]
%    where \(c\) and \(\omega\) are constants, representing a saturating response where the predation rate increases at a decreasing rate as prey density increases.

% 3. \textbf{Type III Functional Response (Holling's Type III)}:
%    \[
%    f(x) = \frac{c x^2}{1 + \omega N^2},
%    \]
%    where \(c\) and \(\omega\) are constants, representing a sigmoidal response where the predation rate is low at low prey densities, increases rapidly at intermediate densities, and saturates at high densities.

\section{Deterministic and stochastic Gause models}

We begin with a dimensionless formulation of the deterministic Gause model and a review of its relevant structural properties. Namely, the following system of ordinary differential equations
\begin{equation}\label{eq_DimGause}
\frac{\ud x}{\ud t}  = rx\left(1-\dfrac{x}{K}\right)- f(x)y,\quad
\frac{\ud y}{\ud t}  =bf(x)y-m y,
\end{equation}
where all model parameters are assume positive. We distinguish between two different classes of the functional response $f$ in \cref{eq_DimGause}: linear response (also known as of Holling Type I); and the class of bounded, non-negative continuous functional responses such that $f(0) =0$, which includes Holling's Types II, III and IV. See \cref{tab_FRadim}. For both classes we arrive at the same non-dimensional model but use slightly different normalizations.

For a linear functional response $f(x) = cx$ for some $c>0$, we change to dimensionless variables by
\begin{equation}\label{covDetLin}
    \frac{x}{K} \to x, \quad \frac{c}{r}y \to y, \quad r t \to t, 
\end{equation}
and define the following dimensionless functional response and parameters 
\begin{equation}\label{def_phibd1}
    \varphi(x) = x, \quad \beta = \frac{bcK}{r}, \quad \delta = \frac{m}{r}.
\end{equation}
For a bounded functional response $f(x)$ with $F:=\sup_{x\geq 0} f(x) < \infty$, we change variables by
\begin{equation}\label{covDetBded}
    \frac{x}{K} \to x, \quad \frac{F}{Kr}y \to y, \quad r t \to t, 
\end{equation}
and define 
\begin{equation}\label{def_phibd2}
    \varphi(x) = \frac{f(K x)}{F}, \quad \beta = \frac{Fb}{r}, \quad \delta = \frac{m}{r}.
\end{equation}
Under the corresponding change of variables and notation, the Gause model \cref{eq_DimGause} takes the adimensional form
\begin{equation}\label{eq_DetGause}
    \deriv{x}{t} = x(1-x) - \varphi(x) y, \quad \deriv{y}{t} = \beta \varphi(x) y -\delta y.
\end{equation}

Since $\varphi(0) = 0$, \cref{eq_DetGause} always has two equilibrium points corresponding to extinction events: the global extinction state $(0,0)$ and predator extinction $(1,0)$, furthermore,
\begin{equation}\label{cond_Determ}
    (0,0) \text{ is always unstable, } \quad (1,0) \text{ is asymptotically stable if and only if } \beta < \beta^*:=\frac{\delta}{\varphi(1)}.
\end{equation}
A bifurcation occurs at $\beta = \beta^*$, giving rise to various types of stable manifolds that guarantee persistence of both species, including stable solutions and limit cycles. These have been characterized for some functional response types and parameter ranges in \cite{kot2001elements}.

The corresponding dimensionless stochastic Gause model can be obtained by performing the change of variables \eqref{covDetLin} or \eqref{covDetBded} in \cref{eq_DimStochGause}, along with the normalization of the noise intensities
\begin{equation}
    \sigma_x = \frac{s_x}{\sqrt{r}}, \quad \sigma_y = \frac{s_y}{\sqrt{r}}
\end{equation}
We thus arrive to our system of interest,
\begin{equation}\label{eq_StochGause}
	\ud x= (x(1-x) -\varphi(x)y) \ud t + x \sigma_x \ud B_x, \quad
	\ud y =  (\beta \varphi(x) y- \delta y) \ud t + y \sigma_y \ud B_y.
\end{equation}
where the parameters $\beta,\delta$ are as in \cref{def_phibd1,def_phibd2}, $B_x$ and $B_y$ are independent Brownian processes and $\varphi$ is a continuous, positive function such that $\varphi(0) = 0$. Note that under this hypotheses, the coefficients of \cref{eq_StochGause} are locally Lipschitz, so standart SDE theory guarantees the existence of strong solutions.

\begin{table}[ht]
\centering
\begin{tabular}{|c|c|c|c|c|}
\hline
Type & $f(x)$ & $F$ & $\varphi(x)$ & parameters \\
\hline
I & $cx$ & $\infty$ & $x$ &  \\
\hline
II & $\frac{c x}{x+a}$ & $c$ & $\frac{x^2}{x + \alpha}$ & $\alpha = \frac{a}{K}$ \\
\hline
III & $\frac{c x^2}{x^2+a}$ & $c$ & $\frac{x^2}{x^2 + \alpha}$ & $\alpha = \frac{a}{K^2}$ \\
\hline
IV & $\frac{c x}{\frac{x^2}{i}+x+a}$ & $\frac{ci}{1+2 \sqrt{ai}}$ & $\frac{(\alpha + 2 \sqrt{\gamma})x}{\gamma + \alpha x + x^2}$ & $\alpha = \frac{i}{K}, \, \gamma = \frac{ai}{K^2}$ \\
\hline
\end{tabular}
\caption{Functional responses, their adimensionalization and parameters}
\label{tab_FRadim}
\end{table}

\subsection{Regularity and boundedness}\label{sec:prel}

In this section we present fundamental results concerning the model in \cref{eq_StochGause}.  The first result  guarantees that the solution to \cref{eq_StochGause} for functional responses of all types in \cref{tab_FRadim} is well-posed and `regular'. Namely, that solutions remain within the positive quadrant $\mathbb{R}_+^2:=\{ (x,y)\in\mathbb{R}^2:x>0,y>0 \}$ and are bounded in probability, for all $t \geq 0$. The proofs of the theorems in this section are deferred to the Appendix.

\begin{Thm}\label{thm00}
Suppose all parameters in \cref{eq_StochGause} are positive and that the functional response is continuous, nonnegative and satisfies that  $\sup_{x>0} \varphi(x)/x \leq 1$.  Then, for any initial value $(x(0),y(0))\in\mathbb{R}_{+}^{2}$, \cref{eq_StochGause} has, with probability one, a unique solution  $(x(t),y(t))\in\mathbb{R}_{+}^{2}$ for all $t \geq 0$. 
\end{Thm}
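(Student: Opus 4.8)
The plan is to follow the standard Khasminskii--Mao scheme: produce a unique local solution, then exhibit a single Lyapunov function that simultaneously prevents the solution from reaching the coordinate axes (the extinction boundary) and from escaping to infinity in finite time. Since $\varphi$ is continuous and the remaining coefficients of \cref{eq_StochGause} are polynomial, the drift and diffusion are locally Lipschitz, so standard SDE theory yields a unique maximal strong solution on a random interval $[0,\tau_e)$, where $\tau_e$ is the explosion time. For each integer $k$ exceeding $\max\{x(0),y(0),1/x(0),1/y(0)\}$ I would set
\begin{equation*}
\tau_k = \inf\{\, t\in[0,\tau_e): x(t)\notin(1/k,k)\ \text{or}\ y(t)\notin(1/k,k)\,\},
\end{equation*}
so that $\tau_k$ increases to $\tau_e$, and reduce the theorem to showing $\tau_k\to\infty$ almost surely.

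The crux is the choice
\begin{equation*}
V(x,y) = (x-1-\log x) + \tfrac1\beta(y-1-\log y),
\end{equation*}
which is nonnegative on $\R_+^2$ (because $\log u\le u-1$) and diverges to $+\infty$ whenever a coordinate tends to $0$ or to $+\infty$; this divergence is what lets a single function control both positivity and non-explosion. Applying the generator of \cref{eq_StochGause} to $V$, the dangerous contributions are the interaction terms $-\varphi(x)y$ from the prey equation and $\beta\varphi(x)y$ from the predator equation. The weight $1/\beta$ is chosen precisely so that these cancel, after which one is left with $2x-x^2-1=-(x-1)^2\le 0$, bounded constants from the It\^o correction, and a single residual term $\varphi(x)y/x$. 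Here the hypothesis $\sup_{x>0}\varphi(x)/x\le 1$ enters decisively, giving $\varphi(x)y/x\le y$; combined with the elementary bound $y\le A\bigl(1+(y-1-\log y)\bigr)$ for a universal constant $A$, one obtains $\mathcal L V\le C(1+V)$ for a constant $C$ depending only on the parameters. This estimate, and in particular the cancellation together with the sublinearity that produce it, is the main obstacle: without either ingredient the product $\varphi(x)y$ could not be absorbed into $C(1+V)$.

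The remainder is routine. By Dynkin's formula and Gronwall's inequality,
\begin{equation*}
\EXP\bigl[V(x(t\wedge\tau_k),y(t\wedge\tau_k))\bigr]\le \bigl(V(x(0),y(0))+Ct\bigr)e^{Ct},
\end{equation*}
a bound uniform in $k$. On $\{\tau_k\le T\}$ one of the coordinates equals $1/k$ or $k$ at time $\tau_k$, so $V(x(\tau_k),y(\tau_k))\ge m_k$ with $m_k\to\infty$; hence $m_k\,\P(\tau_k\le T)\le\EXP[V(x(T\wedge\tau_k),y(T\wedge\tau_k))\,\IND{\tau_k\le T}]$ stays bounded in $k$, forcing $\P(\tau_k\le T)\to 0$. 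Since $\tau_k\le\tau_e$ gives $\{\tau_e\le T\}\subseteq\{\tau_k\le T\}$, this yields $\P(\tau_e\le T)=0$ for every $T$, so $\tau_e=\infty$ almost surely, and the blow-up of $V$ on $\partial\R_+^2$ shows the solution never leaves $\R_+^2$.
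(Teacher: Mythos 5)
Your proof is correct and follows essentially the same route as the paper: a unique local solution from locally Lipschitz coefficients, stopping times $\tau_k$, and a Lyapunov function of the form $(x-1-\log x)+c\,(y-1-\log y)$ whose generator is controlled using the hypothesis $\sup_{x>0}\varphi(x)/x\le 1$, followed by the standard expectation bound at $\tau_k\wedge T$. The only cosmetic difference is that the paper chooses weights $a,b$ so that the interaction terms are \emph{dominated} (made nonpositive), yielding a constant bound $L[V]\le M$ with no need for Gronwall, whereas you cancel the interaction terms exactly with the weight $1/\beta$ and then absorb the residual linear-in-$y$ term via $L V\le C(1+V)$ and Gronwall's inequality --- both are standard variants of the same Khasminskii--Mao non-explosion test.
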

\begin{proof}
The main tool for analysis is the Lyapunov operator associated to \cref{eq_StochGause} \citep{mao2007stochastic}
\begin{equation}\label{operator}
L=\frac{\partial}{\partial t}+\left[ x\left( 1-x\right) -\varphi(x)y\right]\frac{\partial}{\partial x}+\left[ \beta \varphi(x)y-\delta y\right]\frac{\partial}{\partial y}+\frac{1}{2}\sigma_x^2x^2\frac{\partial^2}{\partial x^2}+\frac{1}{2}\sigma_y^2y^2\frac{\partial^2}{\partial y^2}
\end{equation}
Following standard techniques, it is enough to show that there is a function $V:\mathbb{R}_+^{2}\rightarrow\mathbb{R}_+$ such that $L[V](x,y)\leq M$ for some positive constant $M$ and for all $x$,$y$ in $\R^2_+$. See \citet{khasminskii2012stability, mao2007stochastic}.

Since the drift and diffusion coefficients in \cref{eq_StochGause} are locally Lipschitz functions, then for any initial value $(x(0),y(0))\in\mathbb{R}_{+}^{2}$ there is an unique solution $(x(t),y(t))$ with $t\in [0,\tau_e)$, where $\tau_e$ is an explosion time. We must to show that $\mathbb{P}(\tau_e=\infty)=1$. Indeed, let $n_0$ be large enough such that $1/n_0 \leq x(0),y(0) \leq n_0$. For each $n\geq n_0$ define the stopping times 
$$\tau_{n}=\inf_{t\in[0,\tau_e]}\left\lbrace x(t)\not\in(1/n,n) \mbox{ or } y(t)\not\in (1/n,n)\right\rbrace,$$		
with $\inf \emptyset :=\infty$. Clearly, $\tau_n$ is an increasing random time and we can define $\tau_{\infty}=\lim_{n \to{+}\infty}{\tau_n} \leq\tau_{e}$ almost surely. Note that for $0\leq t \leq \tau_\infty$, the solution $(x(t),y(t))$ remains in $\R^2_+$.

We proceed by contradiction. If $\mathbb{P}(\tau_{\infty}<\infty)>0$, then there are constants 
$T>0$ and $\epsilon\in(0,1)$ such that $\mathbb{P}\left( \tau_{\infty}\leq T\right) >\epsilon.$
Therefore, there is $n_1\geq n_0$ such that	
$\mathbb{P}\left( \tau_n\leq T\right) \geq\epsilon$ for all $n \geq n_1$. Define $V:\mathbb{R}_{+}^{2}\rightarrow\mathbb{R}_{+}$ by
$$V(x,y)=ax-1-\log(ax)+by-1-\log(by),$$
where $a$ and $b$ are positive constants to be determined. Applying the operator in \cref{operator} to $V$ gives 
\begin{equation}\label{eqA_L}
    L[V]= ax-ax^2-a\varphi(x)y-1+x+\dfrac{\varphi(x)}{x}y+b\beta\varphi(x)y-by-\beta \varphi(x)+1+\dfrac{\sigma_{x}^{2}}{2}+\dfrac{\sigma_{y}^{2}}{2}.
\end{equation}

It\^o's formula and the fact that $\varphi(x) \leq x$ for all $x$ allows us to write
\begin{align*}
\ud V&=L[V] \ud t+\sigma_{x}\left( ax-1\right)x \ud B_{x}(t)+\sigma_{y}\left( ay-1\right)y \ud B_{y}(t)\\
&\leq\left( (a+1)x-\dfrac{ax^2}{k}+(b\beta-a)\varphi(x)y+(1-b)y+\dfrac{\sigma_{x}^{2}}{2}+\dfrac{\sigma_{y}^{2}}{2}\right)\ud t\\
&\quad + \sigma_{x}\left( ax-1\right)x \ud B_{x}(t)+\sigma_{y}\left( ay-1\right)y \ud B_{y}(t)
\end{align*}
Further, taking $a$ and $b$ such that $b>1$ and $a/b>\beta$,  guarantees the existence of a constant $M>0$ such that
\begin{equation}\label{eqA_dV}
    \ud V\leq M \ud t+\sigma_{x}\left( ax-1\right)y \ud B_{x}(t)+\sigma_{y}\left( ay-1\right)y \ud B_{y}(t).
\end{equation}

Integrating both sides of \cref{eqA_dV} between $0$ and $\tau_n\wedge T$, and taking expected value, we obtain
\begin{equation}\label{cota1}
    \EXP ~ V(x(\tau_{n}\land T),y(\tau_{n}\land T))\leq V(x_{0},y_{0})+MT.
\end{equation}
Finally, since $V$ is a non-negative function, we can bound
\begin{eqnarray}\label{cota2}
    \EXP ~ V(x(\tau_{n}\land T),y(\tau_{n}\land T))&\geq & \EXP ~ V(x(\tau_{n}\land T),y(\tau_{n}\land T)I(\tau_n\leq T))\nonumber\\
    & = & \EXP ~ V(x(\tau_{n}),y(\tau_{n}))\nonumber\\
    & \geq & \min\left\lbrace (an-1-\log(an)),(a/n-1-\log(a/n))\right\rbrace
\end{eqnarray}
which contradicts (\ref{cota1}). Therefore $\mathbb{P}(\tau_{\infty}<\infty) =1$ and the solution exists and remains in $\R^2_+$ for all $t \geq 0$.
\end{proof}

Another important feature of the stochastic Gause model is that it has solutions that are bounded in probability. Borrowing the techniques in \citet{GEI199423}, and comparing our model to a mutualistic system, we  establish boundedness in the following general way.

\begin{Thm}\label{thm:bounded}
Let $\theta=(\theta_1,\theta_2)$ be a vector of positive numbers such that $\theta_1+\theta_2<\frac{1}{2}$. Then, under the hypotheses of \cref{thm00}, the solution $(x(t),y(t))$ to the stochastic Gause model in \cref{eq_StochGause} satisfies
\begin{equation}\label{Mao2002}
    \log \left( \mathbb{E}[x(t)^{\theta_1}y(t)^{\theta_2}] \right)\leq e^{-c_1t}(\theta_1\log x(0)+\theta_2\log y(0))+\frac{c_2}{c_1}\left(1-e^{-c_1t}\right)
\end{equation}
for all $t\geq0$, where the positive constants $c_1,c_2$ are given by
\begin{align*}
 c_1 &= \frac{1}{4}(1-\theta_1-\theta_2)\min\{\theta_1\sigma_x^2,\theta_2\sigma_y^2\},\\   
 c_2 &= |\theta| \sqrt{1+\delta^2}+\frac{|\theta|^2 (2+\beta^2 +\sqrt{4+\beta^4})}{8c_1}.
\end{align*}
\end{Thm}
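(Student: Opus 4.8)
The plan is to run a single Lyapunov argument on $V(x,y)=x^{\theta_1}y^{\theta_2}$ and convert the pointwise action of the generator \cref{operator} into a scalar differential inequality for $g(t):=\log\EXP[V(x(t),y(t))]$, whose solution is \emph{exactly} the right-hand side of \cref{Mao2002}. Indeed, that right-hand side solves the linear ODE $g'=-c_1 g+c_2$ with $g(0)=\theta_1\log x(0)+\theta_2\log y(0)$, so by the comparison (Grönwall) principle it is enough to show that $g$ is a subsolution, $g'(t)\le -c_1 g(t)+c_2$.

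First I would compute $L[V]$. Dividing by $V$ and invoking the hypotheses inherited from \cref{thm00}, namely $\varphi(x)\le x$ and $\sup_{x>0}\varphi(x)/x\le 1$, gives
\[
\frac{L[V]}{V}=\theta_1(1-x)-\theta_1\tfrac{\varphi(x)}{x}y+\theta_2\beta\varphi(x)-\theta_2\delta-\tfrac{\theta_1(1-\theta_1)}{2}\sigma_x^2-\tfrac{\theta_2(1-\theta_2)}{2}\sigma_y^2 .
\]
Because $\theta_1,\theta_2<\tfrac12<1$, the two diffusion terms are strictly negative; this is the reservoir of damping that the whole estimate will spend. Crucially I would \emph{retain} the predation term $-\theta_1\tfrac{\varphi(x)}{x}y\le 0$ rather than discard it: comparing the prey equation to the associated mutualistic system, whose prey component is the decoupled stochastic logistic $\bar x$ in the manner of \citet{GEI199423}, shows that the prey self-limitation $-\theta_1 x$ dominates the predator gain $\theta_2\beta\varphi(x)$ for large $x$ (using sub-linearity/boundedness of $\varphi$), while the retained predation term supplies the only decay available in the $y$-direction.

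The heart of the proof is then the pointwise drift inequality
\[
L[V]\le -c_1\,V\log V+c_2\,V,\qquad \log V=\theta_1\log x+\theta_2\log y,
\]
equivalently $\tfrac{L[V]}{V}+c_1(\theta_1\log x+\theta_2\log y)\le c_2$ on all of $\R^2_+$. I would establish it by splitting the negative diffusion budget into two parts: a piece of size $c_1$ is matched against the logarithmic terms (here the linear decay $-\theta_1 x$ and the predation term beat the slowly growing logarithms, which is where $\theta_1+\theta_2<\tfrac12$ and the factor $\min\{\theta_1\sigma_x^2,\theta_2\sigma_y^2\}$ enter $c_1$), while the remaining budget is spent in a Young/Cauchy--Schwarz step to absorb the affine contributions $\theta_1-\theta_2\delta$ and $\theta_2\beta\varphi(x)$, producing $c_2=|\theta|\sqrt{1+\delta^2}+|\theta|^2(2+\beta^2+\sqrt{4+\beta^4})/(8c_1)$ after optimizing the square completions.

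Finally I would take expectations. Itô's formula (after a standard stopping-time localization) gives $\frac{d}{dt}\EXP[V]=\EXP[L[V]]\le -c_1\EXP[V\log V]+c_2\EXP[V]$; since $v\mapsto v\log v$ is convex, Jensen's inequality yields $\EXP[V\log V]\ge\EXP[V]\log\EXP[V]$. Writing $W=\EXP[V]$ and $g=\log W$ this collapses to $g'=W'/W\le -c_1 g+c_2$, and the comparison principle delivers \cref{Mao2002}. The hard part will be the pointwise drift inequality: controlling the predator direction — whose only damping $-\theta_1\varphi(x)y/x$ degenerates where $\varphi(x)/x$ is small — simultaneously with the unbounded prey direction, and doing so sharply enough to reproduce the stated $c_1$ and $c_2$. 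The mutualistic comparison together with the sub-linear or bounded growth of $\varphi$ is precisely what renders this term manageable.
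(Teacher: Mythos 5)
Your scaffolding (It\^o's formula, Jensen's inequality applied to the convex map $v\mapsto v\log v$, and the ODE comparison for $g=\log\EXP[V]$) is the right skeleton, and your formula for $L[V]/V$ is correct. But the step you yourself identify as the heart of the proof --- the pointwise inequality $L[V]\le -c_1V\log V+c_2V$ on $\R^2_+$ --- is \emph{false} for \cref{eq_StochGause}, not merely hard. Two counterexamples. (i) Type I, $\varphi(x)=x$: fixing $y=1$ and letting $x\to\infty$,
\begin{equation*}
\frac{L[V]}{V}+c_1\log V=(\theta_2\beta-\theta_1)\,x+c_1\theta_1\log x+O(1)\;\longrightarrow\;+\infty
\quad\text{whenever }\beta>\theta_1/\theta_2,
\end{equation*}
so sub-linearity $\varphi(x)\le x$ (all that \cref{thm00} assumes) does \emph{not} make the self-limitation $-\theta_1x$ dominate the predator gain $\theta_2\beta\varphi(x)=\theta_2\beta x$; since the theorem quantifies over all $\theta_1,\theta_2>0$ with $\theta_1+\theta_2<\tfrac12$ and all $\beta>0$, this is fatal. (ii) Bounded responses with $\varphi(x)/x\to0$ as $x\to0$ (e.g.\ Type III in \cref{tab_FRadim}): along $x=1/y$, $y\to\infty$, the only $y$-damping $-\theta_1\varphi(x)y/x$ stays bounded while $c_1(\theta_1\log x+\theta_2\log y)=c_1(\theta_2-\theta_1)\log y\to+\infty$ when $\theta_2>\theta_1$. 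The reason is structural: with \emph{linear} noise the It\^o correction in $L[V]/V$ is the negative constant $-\tfrac{\theta_1(1-\theta_1)}{2}\sigma_x^2-\tfrac{\theta_2(1-\theta_2)}{2}\sigma_y^2$, and a constant ``reservoir of damping'' cannot absorb unbounded terms. Indeed the constants you are trying to reproduce give this away: $c_2-|\theta|\sqrt{1+\delta^2}=|\theta|^2\|A\|^2/(4c_1)$ with $\|A\|^2=\tfrac12\bigl(2+\beta^2+\sqrt{4+\beta^4}\bigr)$ the squared spectral norm of the interaction matrix; this is exactly the Young-inequality penalty for absorbing $\theta^{T}\!Ax$ into a quadratic damping term $-c_1|x|^2$, and such damping is produced by the generator only when the noise intensity is itself proportional to the state (noise of the form $\mathrm{diag}(x)\,\sigma x\,\ud w$, as in \citet{mao2002environmental}), which is not the case in \cref{eq_StochGause}.

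The paper's proof never writes a generator inequality for the Gause system at all. It uses the comparison theorem of \citet{GEI199423} (exploiting $-\varphi(x)y\le0$ in the prey equation and $\varphi(x)\le x$ in the predator equation) to dominate $(x,y)$ pathwise by the solution $(\tilde x,\tilde y)$ of the decoupled mutualistic system $\ud\tilde x=\tilde x(1-\tilde x)\ud t+\sigma_x\tilde x\ud B_x$, $\ud\tilde y=(\beta\tilde x\tilde y-\delta\tilde y)\ud t+\sigma_y\tilde y\ud B_y$, then uses monotonicity of $(x,y)\mapsto x^{\theta_1}y^{\theta_2}$ and imports the bound \cref{Mao2002} for $(\tilde x,\tilde y)$ from Theorem 3.1 of \citet{mao2002environmental}. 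You invoke the mutualistic comparison only as a heuristic \emph{inside} your pointwise estimate, but a pathwise domination cannot repair a pointwise drift inequality --- and note that your obstruction (i) persists for the mutualistic system too, since its drift still contains $+\beta\tilde x\tilde y$; the mechanism behind the cited bound genuinely requires the quadratic-noise damping of that reference. If you want a self-contained argument for \cref{eq_StochGause}, a workable alternative is the linear Lyapunov function $W=\beta x+y$, whose drift equals $\beta x(1-x)-\delta y\le C-\delta W$, giving $\sup_{t}\EXP[W(t)]<\infty$ and then bounds on $\EXP[x^{\theta_1}y^{\theta_2}]$ by H\"older's inequality --- but this yields constants different from the $c_1,c_2$ in the statement.
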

\begin{proof}
Consider the stochastic system 
\begin{equation}\label{mutualism}
    \ud\tilde{x} = \tilde{x}(1 - \tilde{x}) \, \ud t + \sigma_x \tilde{x} \, \ud B_x, \quad
    \ud\tilde{y} = (\beta \tilde{x} \tilde{y} - \delta \tilde{y}) \, \ud t + \sigma_y \tilde{y} \, \ud B_y\nonumber
\end{equation}
The model in \cref{mutualism} represents a mutualistic relationship were species $\tilde{y}$ benefits from the presence of $\tilde{x}$ without affecting $\tilde{x}$. This model is studied in \citet{mao2002environmental}, and from theorem 3.1 of that work, \cref{Mao2002} follows for the process $(\tilde{x},\tilde{y})$.

By the comparison theorem, if $x(0) \leq \tilde{x}(0)$ and $y(0) \leq \tilde{y}(0)$ then, $x(t)\leq \tilde{x}(t)$ and $y(t)\leq \tilde{y}(t) $ almost surely for all $t\geq 0$. Since $\theta_1,\theta_2>0$ and $\theta_1+\theta_2 < \tfrac{1}{2}$ then $x(t)^{\theta_1}y(t)^{\theta_2}\leq \tilde{x}(t)^{\theta_1}\tilde{y}(t)^{\theta_2}$ and \cref{Mao2002} follows for the process $(x,y)$.
    
\end{proof}

Note that by letting $t\to\infty$ in (\ref{Mao2002}), we obtain the  asymptotic mean bound 
\begin{equation}\label{cotaE}
    \limsup_{t\to\infty}\mathbb{E}(x(t)^{\theta_1}y(t)^{\theta_2})\leq e^{c_2/c_1}.
\end{equation}
Moreover, due to Markov inequality for all $\epsilon>0$, there exists $M>0$ such that 
\begin{equation}
 \limsup_{t\to \infty}\mathbb{P}\left( x(t)^{\theta_1}y(t)^{\theta_2}> M \right)\leq \epsilon.   
\end{equation}
In \cref{fig_example} we present examples of solution paths to \cref{eq_StochGause} under various functional responses and compare them with the solutions to the deterministic system.

 \begin{figure}
    \centering
   \begin{subfigure}{7cm}
   \includegraphics[width=\linewidth]{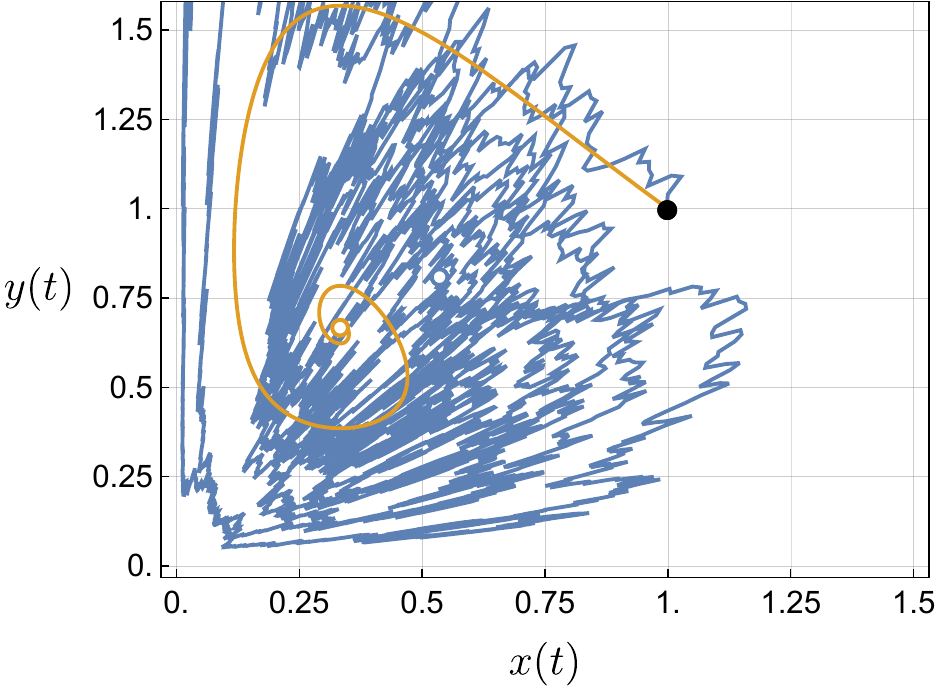}
   \caption{Type I.}
   \end{subfigure}~~
   \begin{subfigure}{7cm}
   \includegraphics[width=\linewidth]{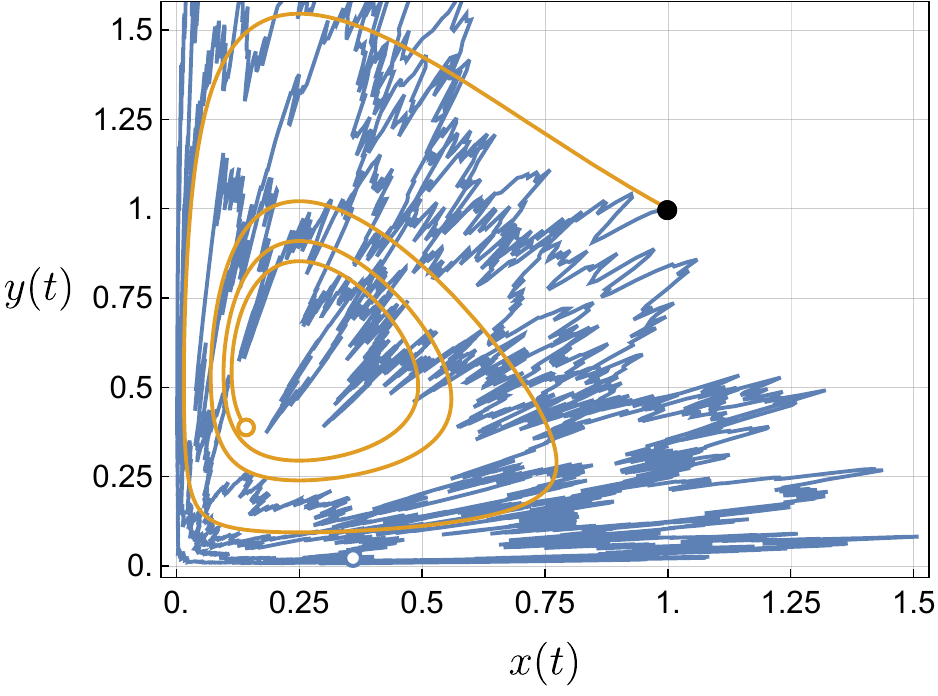}
   \caption{Type II, $\alpha = 0.5$.}
   \end{subfigure}\\[10pt]
   \begin{subfigure}{7cm}
   \includegraphics[width=\linewidth]{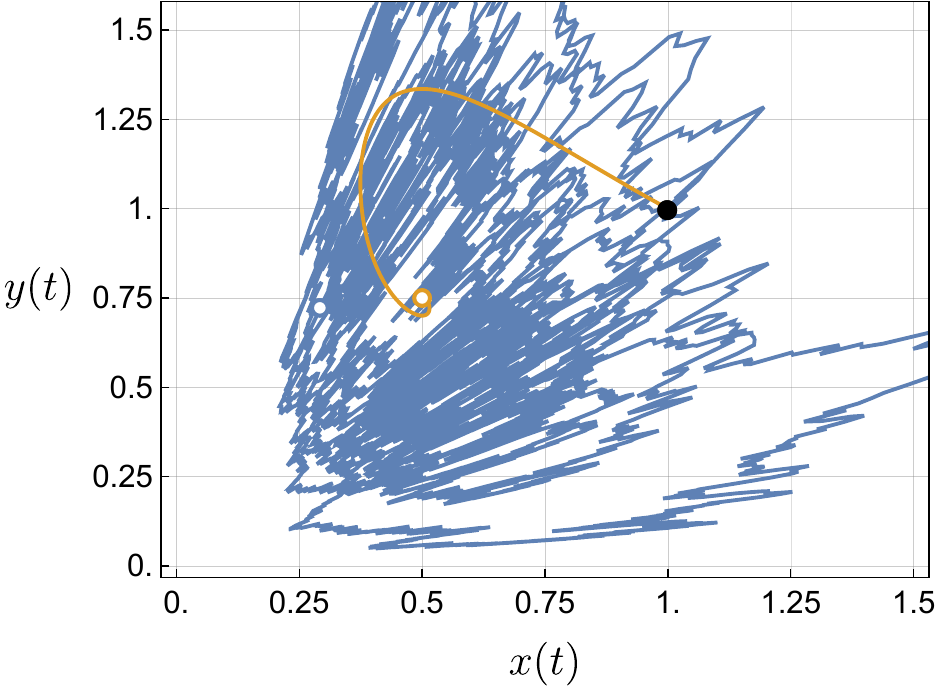}
   \caption{Type III, $\alpha = 0.5$.}
   \end{subfigure}~~
   \begin{subfigure}{7cm}
   \includegraphics[width=\linewidth]{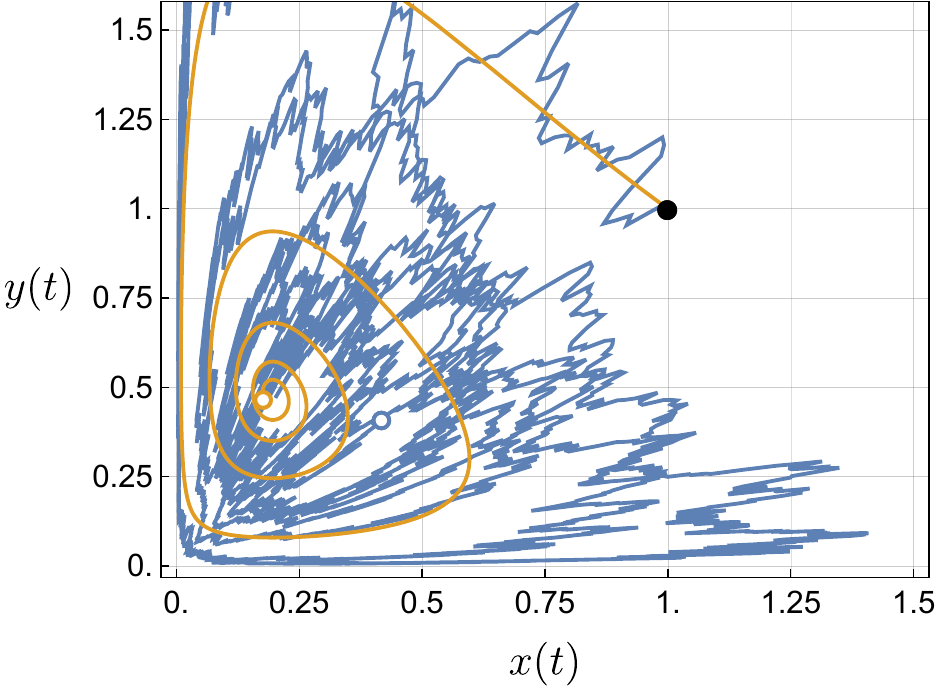}
   \caption{Type IV, $\alpha = 0.25$, $\gamma = 1.5$.}
   \end{subfigure}
   \caption{Examples comparing the solution to the deterministic Gause model in \cref{eq_DetGause} with a realization of the solution to the stochastic counterpart in \cref{eq_StochGause} for each functional response in \cref{tab_FRadim}. Every path has $t \in [0,100]$, $(x(0),y(0)) = (1,1)$, $\beta = 1.5$, $\delta = 0.5$, $\sigma_x^2=\sigma_y^2 = 0.1$. Blank markers show the value of the paths at the final time. In every case, $\beta > \beta^*$ guaranteeing coexistence in the deterministic model.}
   \label{fig_example}
\end{figure}

\section{Noise-induced extinctions}\label{sec:extinction}

We now turn our attention to extinction events, namely, conditions under which the population of one or both species approaches zero. Note from the conditions in \eqref{cond_Determ} that if $(x(0),y(0)) \in \R^2_+$, no extinction can occur in finite time in the deterministic model \eqref{eq_DetGause}. It follows from \cref{thm00} that finite-time extinction is also precluded in the stochastic Gause's model. We thus focus on the events of \textit{eventual extinction}, namely when the population of one or both species approaches zero as $t \to \infty$. We denote these eventual extinction events as
\begin{equation}
    \mathcal{E}_x = \left[ \lim_{t\to\infty} x(t)=0\right], \quad \mathcal{E}_y = \left[ \lim_{t\to\infty} y(t)=0\right].
\end{equation}

It follows from \eqref{cond_Determ} that in the deterministic model, the only possibility for eventual extinction corresponds to the case where $(1,0)$ is asymptotically stable, and the predator $y(t) \to 0$ as $t\to \infty$. This occurs when the mortality rate $\delta$ is higher than $\beta /\varphi(1)$. The prey $x$ in the deterministic model always persists. In contrast, we will show that the presence of noise in the stochastic model can drive either species to eventual extinction. First, we show that as is natural to expect, if the prey becomes extinct eventually, so will the predators.  

\begin{Prop}\label{propA}
Let $(x(t),y(t))$, $t\geq 0$ be the solution to model \cref{eq_StochGause} under the hypotheses of \cref{thm00}. Then $\P(\mathcal{E}_y | \mathcal{E}_x) =  1$.
\end{Prop}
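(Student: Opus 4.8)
The plan is to exploit the fact that the second equation in \cref{eq_StochGause} is \emph{linear} in $y$: the drift is $(\beta\varphi(x(s))-\delta)y$ and the diffusion is $\sigma_y y$, so $y$ solves a geometric-type SDE with a random, time-dependent growth rate and admits the closed exponential form
$$y(t) = y(0)\exp\left(\int_0^t\left(\beta\varphi(x(s)) - \delta - \tfrac{1}{2}\sigma_y^2\right)\ud s + \sigma_y B_y(t)\right),$$
obtained by applying It\^o's formula to $\log y(t)$. In particular $y$ stays strictly positive, so its fate is entirely controlled by the long-run exponential rate $\tfrac1t\log y(t)$.

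First I would take logarithms and divide by $t$, getting
$$\frac{1}{t}\log y(t) = \frac{\log y(0)}{t} + \frac{\beta}{t}\int_0^t \varphi(x(s))\,\ud s - \delta - \frac{\sigma_y^2}{2} + \sigma_y\frac{B_y(t)}{t},$$
and then evaluate each term pathwise on the event $\mathcal{E}_x$. The initial term tends to $0$, and the Brownian term $\sigma_y B_y(t)/t\to 0$ almost surely by the strong law of large numbers for Brownian motion. The only term requiring care is the Cesàro average of $\beta\varphi(x(s))$.

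The key step is to show this average vanishes on $\mathcal{E}_x$. There every path satisfies $x(s)\to 0$, so by continuity of $\varphi$ and $\varphi(0)=0$ we have $\varphi(x(s))\to 0$; more usefully, the hypothesis $\sup_{x>0}\varphi(x)/x\le 1$ from \cref{thm00} yields the clean pathwise bound $0\le \tfrac1t\int_0^t\varphi(x(s))\,\ud s \le \tfrac1t\int_0^t x(s)\,\ud s$, and a continuous path converging to $0$ has Cesàro average converging to $0$, so the right-hand side tends to $0$. Combining the four limits gives
$$\limsup_{t\to\infty}\frac{1}{t}\log y(t) \le -\delta - \frac{\sigma_y^2}{2} < 0 \qquad\text{on } \mathcal{E}_x,$$
whence $y(t)\to 0$ (in fact exponentially fast) along almost every path of $\mathcal{E}_x$. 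Thus $\mathcal{E}_x\subseteq\mathcal{E}_y$ up to a null set, i.e.\ $\P(\mathcal{E}_x\cap\mathcal{E}_y^c)=0$, which gives $\P(\mathcal{E}_y\mid\mathcal{E}_x)=1$ (assuming $\P(\mathcal{E}_x)>0$, the statement being vacuous otherwise).

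The main obstacle I anticipate is the pathwise Cesàro argument: I must ensure that "$x(s)\to 0$" genuinely forces the time-average of $\varphi(x(s))$ to $0$ for the relevant paths, rather than merely in some averaged or in-probability sense. The sublinear bound $\varphi(x)\le x$ is what makes this transparent, since it converts an almost-everywhere pointwise convergence into the integral estimate needed, without a separate argument for the boundedness of $\varphi(x(s))$ over the whole half-line.
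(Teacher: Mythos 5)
Your proof is correct and follows essentially the same route as the paper's: both reduce to showing that the exponential growth rate $\tfrac{1}{t}\log y(t)$ is negative on $\mathcal{E}_x$, using $\varphi(x)\le x$ together with $x(s)\to 0$ to control the drift contribution and the strong law / law of the iterated logarithm for Brownian motion to kill the noise term. The only cosmetic difference is that the paper bounds $\varphi(x(s))$ by a fixed $\epsilon<\delta/\beta$ after a random time and works with the resulting differential inequality, whereas you keep the exact exponential formula and show the Ces\`aro average of $\varphi(x(s))$ vanishes pathwise, which in fact yields the slightly sharper rate bound $-\delta-\sigma_y^2/2$.
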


\begin{proof} 

Let $\epsilon\in(0, \frac{\delta}{\beta})$ be fixed. There exists a strictly positive random variable $T$, finite on $\mathcal{E}_x$, such that $0<x(t)<\epsilon$ for all $t\geq T$ on $\mathcal{E}_x$. Since $\varphi(x) \leq x$ for all $x\geq 0$, then the equation for $\ud y$ in \cref{eq_StochGause} can be bounded on $\mathcal{E}_x$ as
\begin{equation}
    \ud y\leq (\beta\epsilon-\delta)y \ud t+\sigma_y y \ud B_y(t)
\end{equation}
for all $t \geq T$. Further, it follows from the positivity of $y(t)$ that 
\begin{equation}
    \ln(y(t))-\ln(y(T)) \leq (\beta\epsilon-\delta)t+\sigma_y[B_y(t)-B_y(T)]
\end{equation}
with probability one conditionally on $\mathcal{E}_x$. Finally, as a consequence of the Law of the Iterated Logarithm and the fact $(\beta\epsilon-\delta)<0$, we obtain that 
$\limsup_{t\to \infty} \frac{1}{t} \log(y(t)) <0$ and therefore $\mathcal{E}_y$ has probability one conditionally on $\mathcal{E}_x$.
\end{proof}

Eventual extinctions can be caused by a sufficiently large noise, even if the values of the model parameters allow for coexistence under deterministic dynamics. We consider first the effect that large noise has on the prey $x$. The next result shows that a sufficiently large value of the variability $\sigma_x$ on the prey dynamics causes its eventual extinction, regardless of the value or the parameters governing the predator. This phenomenon is shared by the one-dimensional logistic model equation  and extends to this case because the predator's role in the dynamics of the $x$ is to increase mortality. See \citet{oksendal2013stochastic, mao2002environmental}. The proof of \cref{thmExt_x} makes this connection explicit.   

\begin{Thm}\label{thmExt_x}
Let $(x(t),y(t))$, $t\geq 0$ be the solution to model \cref{eq_StochGause} under the hypotheses of \cref{thm00}. If $\sigma_{x}^{2}>2$, then $\P(\mathcal{E}_x) = 1$.
\end{Thm}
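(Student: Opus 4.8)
The plan is to reduce the two-dimensional dynamics for the prey to a one-dimensional comparison problem, exactly as suggested by the remark preceding the statement. The key observation is that the predator only ever increases the mortality of the prey, so the drift in the $x$-equation of \cref{eq_StochGause} satisfies $x(1-x) - \varphi(x)y \leq x(1-x)$ for all $(x,y) \in \R^2_+$, using $\varphi \geq 0$ and $y > 0$. Thus $x(t)$ is dominated almost surely by the solution $\tilde{x}(t)$ of the one-dimensional stochastic logistic equation $\ud \tilde{x} = \tilde{x}(1-\tilde{x})\,\ud t + \sigma_x \tilde{x}\,\ud B_x$ started from $\tilde{x}(0) = x(0)$, by the comparison theorem (the same tool used in \cref{thm:bounded}). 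It therefore suffices to prove that $\tilde{x}(t) \to 0$ almost surely when $\sigma_x^2 > 2$.

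First I would handle the one-dimensional logistic equation. The standard device is to apply It\^o's formula to $\log \tilde{x}(t)$, which is legitimate since \cref{thm00} guarantees $\tilde{x}(t) > 0$ for all $t$. This yields
\begin{equation}\label{eq:logx}
    \log \tilde{x}(t) = \log \tilde{x}(0) + \left(1 - \frac{\sigma_x^2}{2}\right) t - \int_0^t \tilde{x}(s)\,\ud s + \sigma_x B_x(t).
\end{equation}
Since the integral term $-\int_0^t \tilde{x}(s)\,\ud s \leq 0$, dividing by $t$ and using the strong law of large numbers for Brownian motion (equivalently, the Law of the Iterated Logarithm), namely $B_x(t)/t \to 0$ almost surely, gives
\begin{equation*}
    \limsup_{t\to\infty} \frac{1}{t}\log \tilde{x}(t) \leq 1 - \frac{\sigma_x^2}{2}.
\end{equation*}
Under the hypothesis $\sigma_x^2 > 2$, the right-hand side is strictly negative, so $\log \tilde{x}(t) \to -\infty$, i.e. $\tilde{x}(t) \to 0$ almost surely. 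Since $0 < x(t) \leq \tilde{x}(t)$, it follows that $x(t) \to 0$ almost surely, which is precisely $\P(\mathcal{E}_x) = 1$.

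I expect the main obstacle to be the justification of the comparison step rather than the logistic estimate, which is routine. The subtlety is that the comparison theorem for SDEs applies most cleanly to two scalar equations driven by the \emph{same} Brownian motion, and here the prey equation and the comparison equation indeed share $B_x$; the predator $y$ enters only through the drift, where it acts as a nonnegative perturbation lowering the drift of $x$. I would state this carefully: writing the $x$-equation as $\ud x = b(x,y)\,\ud t + \sigma_x x\,\ud B_x$ with $b(x,y) \leq x(1-x)$, the one-dimensional comparison theorem (e.g. the version in \citet{mao2007stochastic}) applies pathwise with the common driving noise $B_x$ and equal diffusion coefficients, yielding $x(t) \leq \tilde{x}(t)$ almost surely. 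An alternative that avoids invoking a two-dimensional comparison entirely is to bound $\log x(t)$ directly: applying It\^o to $\log x(t)$ produces the same identity as \cref{eq:logx} but with the extra nonpositive term $-\int_0^t \varphi(x(s))/x(s)\cdot y(s)\,\ud s$, which only strengthens the negative drift, so the conclusion $\limsup_{t\to\infty} t^{-1}\log x(t) \leq 1 - \sigma_x^2/2 < 0$ follows immediately without any comparison process. I would likely present this direct It\^o argument as the cleanest route, since it mirrors the method already used in the proof of \cref{propA}.
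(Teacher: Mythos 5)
Your proposal is correct, and its core is the same as the paper's: dominate $x(t)$ pathwise by a one-dimensional process driven by the same Brownian motion $B_x$, then show that this dominating process vanishes almost surely because $1-\sigma_x^2/2<0$. The differences are in the choice of dominating process and in the final limit argument. The paper drops \emph{both} nonpositive drift contributions $-x^2$ and $-\varphi(x)y$ and compares with the geometric Brownian motion $\ud\hat{x}=\hat{x}\,\ud t+\sigma_x\hat{x}\,\ud B_x$, whose solution is explicit, so $\hat{x}(t)\to 0$ follows directly from the law of the iterated logarithm with no It\^o computation; you keep the quadratic term and compare with the stochastic logistic equation, which forces you to apply It\^o's formula to $\log\tilde{x}$ and then discard $-\int_0^t \tilde{x}(s)\,\ud s\le 0$ --- correct, but the retained term buys nothing for this statement, so the paper's choice is the more economical one. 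Your alternative route, applying It\^o directly to $\log x(t)$ and discarding the two nonpositive integrals, is the genuinely distinct variant and arguably the cleanest: it bypasses the comparison theorem of \citet{GEI199423} altogether, and hence the one technical subtlety shared by your main route and the paper's proof, namely that the $x$-equation has a random, $y$-dependent drift so the comparison must be applied with an adapted (not autonomous) drift bound. The small price is verifying that $\int_0^t \varphi(x(s))\,y(s)/x(s)\,\ud s$ is finite, which follows from the hypothesis $\sup_{x>0}\varphi(x)/x\le 1$ of \cref{thm00} and the continuity of $y$; this direct argument also mirrors the technique the paper itself uses in the proof of \cref{propA}.
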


\begin{proof} 
Let $\hat{x}$ be the solution to $\ud \hat{x} = \hat{x} \ud t + \sigma_x \hat{x} \ud B_x$. Namely,
$$ \hat{x}(t) = \hat{x}(0) \exp\left( \left(1- \frac{\sigma_x^2}{2} \right)t +\sigma_x B_x(t) \right), \quad t\geq 0.$$
If $\sigma_{x}^{2}>2$, the Logarithm Iterated Theorem implies $\hat{x}(t) \to 0$ as $t \to \infty$ almost surely. Since $\varphi(x) \leq x$, the comparison theorem \cite{GEI199423} implies that $x(t) \leq \hat{x}(t)$ for all $t\geq 0$ with probability one. Therefore $x(t) \to 0$ as $t \to \infty$ almost surely. 
\end{proof}

It follows from \cref{propA} and \cref{thmExt_x} that if $\sigma^2_x > 2$, both the predator and prey are driven to eventual extinction with probability one. Namely, the solution $(0,0)$ to \cref{eq_StochGause} is \emph{stochastically assympotically stable}. See \citet{mao2007stochastic}. Sufficiently large noise on the prey can therefore make certain the scenario of complete eventual extinction, which is  always an impossibility in the deterministic model.

For functional responses of types II-IV, where $\varphi$ is bounded, the next result shows that noise alone can cause the extinction of the predator.

\begin{Thm}\label{thmExt_y} 
Suppose all parameters in \cref{eq_StochGause} are positive and that the functional response satisfies $\varphi(x) \leq 1$ for all $x \geq 0$.  Let $(x(t),y(t))$, $t\geq 0$ be the solution to model \cref{eq_StochGause} for $(x(0),y(0)) \in \R^2_+$. If  $\sigma_{y}^{2}>2(\beta-\delta)$, then $\P(\mathcal{E}_y) = 1$.
\end{Thm}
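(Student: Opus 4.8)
The plan is to mirror the proof of \cref{thmExt_x}: dominate the predator process $y$ from above by a geometric Brownian motion whose exponential rate is negative, and then invoke the Law of the Iterated Logarithm to force decay to zero.

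First I would exploit the boundedness hypothesis $\varphi(x)\leq 1$ to control the drift of the $y$-equation in \cref{eq_StochGause}. Since $\beta\varphi(x)-\delta\leq\beta-\delta$ for all $x\geq 0$, I would introduce the comparison process $\hat y$ driven by the same Brownian motion $B_y$,
$$\ud\hat y = (\beta-\delta)\hat y\,\ud t + \sigma_y\hat y\,\ud B_y, \qquad \hat y(0)=y(0),$$
which is a geometric Brownian motion with the explicit solution
$$\hat y(t) = y(0)\exp\left(\left(\beta-\delta-\frac{\sigma_y^2}{2}\right)t + \sigma_y B_y(t)\right).$$

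Next, since $y$ and $\hat y$ share the diffusion coefficient $\sigma_y\,(\cdot)$ and the drift of $y$ is dominated pointwise by that of $\hat y$, I would apply the comparison theorem \cite{GEI199423} to conclude $y(t)\leq\hat y(t)$ for all $t\geq 0$ with probability one. Finally, I would observe that the hypothesis $\sigma_y^2>2(\beta-\delta)$ forces the rate $\beta-\delta-\sigma_y^2/2<0$; combined with the Law of the Iterated Logarithm (which gives $B_y(t)/t\to 0$ almost surely) this yields $\tfrac1t\log\hat y(t)\to\beta-\delta-\sigma_y^2/2<0$, hence $\hat y(t)\to 0$ and therefore $y(t)\to 0$ almost surely, that is $\P(\mathcal{E}_y)=1$.

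The one step needing care is the application of the comparison theorem, because the genuine $y$-drift $\beta\varphi(x(t))-\delta$ depends on the random prey path $x(t)$ rather than on $y$ alone. As in \cref{thmExt_x}, I would treat $\beta\varphi(x(t))-\delta$ as an adapted, time-dependent coefficient dominated pathwise by the constant $\beta-\delta$, so that the one-dimensional SDE comparison theorem for equations with a common diffusion coefficient still applies to $y$ and $\hat y$. I do not anticipate a genuine obstacle beyond this bookkeeping; note that, in contrast to \cref{thmExt_x}, the argument relies essentially on the boundedness $\varphi\leq 1$, which is exactly why the statement is restricted to the Holling types II--IV.
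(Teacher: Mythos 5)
Your proposal is correct and takes essentially the same route as the paper, which states that \cref{thmExt_y} follows by the same argument as \cref{thmExt_x} using the comparison theorem with respect to $\ud \hat{y} = (\beta - \delta)\hat{y}\,\ud t + \sigma_y \hat{y}\,\ud B_y$ --- exactly your comparison process, followed by the same Law of the Iterated Logarithm step. Your remark about treating $\beta\varphi(x(t))-\delta$ as an adapted, pathwise-dominated drift simply makes explicit a detail the paper leaves implicit, and your observation on why boundedness of $\varphi$ is essential matches the paper's own discussion of why Type I is excluded.
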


The proof of \cref{thmExt_y} follows exactly the same arguments as that of \cref{thmExt_x}, using the comparison theorem with respect to $\ud \hat{y} = (\beta - \delta) \hat{y} \ud t + \sigma_y \hat{y} \ud B_y(t)$. Note that, since $\sigma_y$ is assumed positive, \cref{thmExt_y} implies that if $\beta<\delta$, any amount of noise will drive the predator $y$ to eventual extinction when the functional response in bounded. This is not unexpected because under those conditions, $\beta < \delta$ implies $\beta < \beta^*$ in \eqref{cond_Determ}, which ensures predator extinction in the deterministic model.

\begin{figure}
    \centering
   \begin{subfigure}{7cm}
   \includegraphics[width=\linewidth]{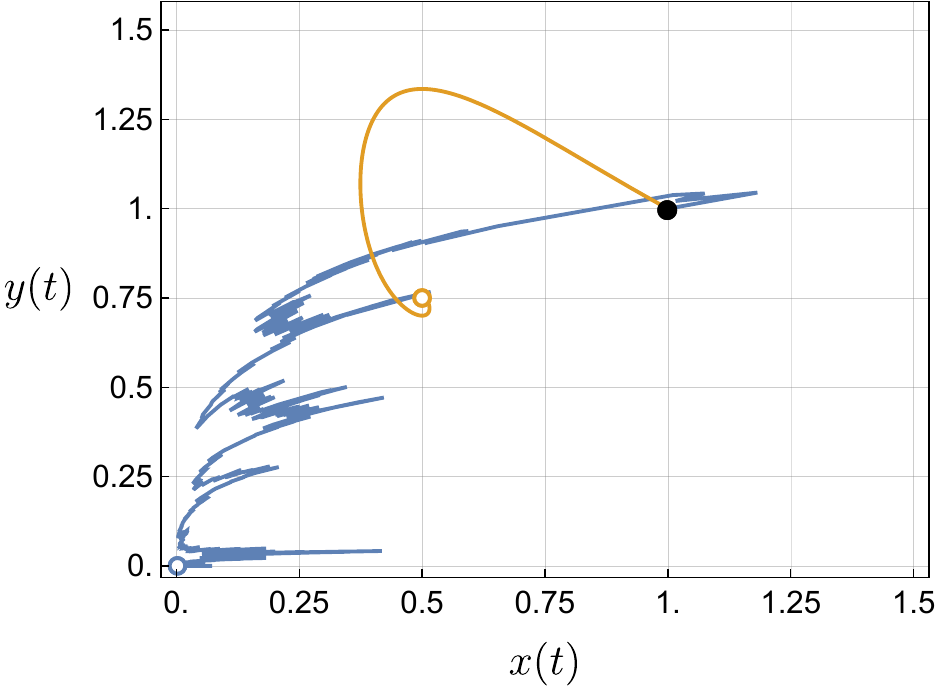}
   \caption{$\sigma_x^2 = 2.1$, $\sigma_y^2 = 0.1$.}
   \end{subfigure}~~
   \begin{subfigure}{7cm}
   \includegraphics[width=\linewidth]{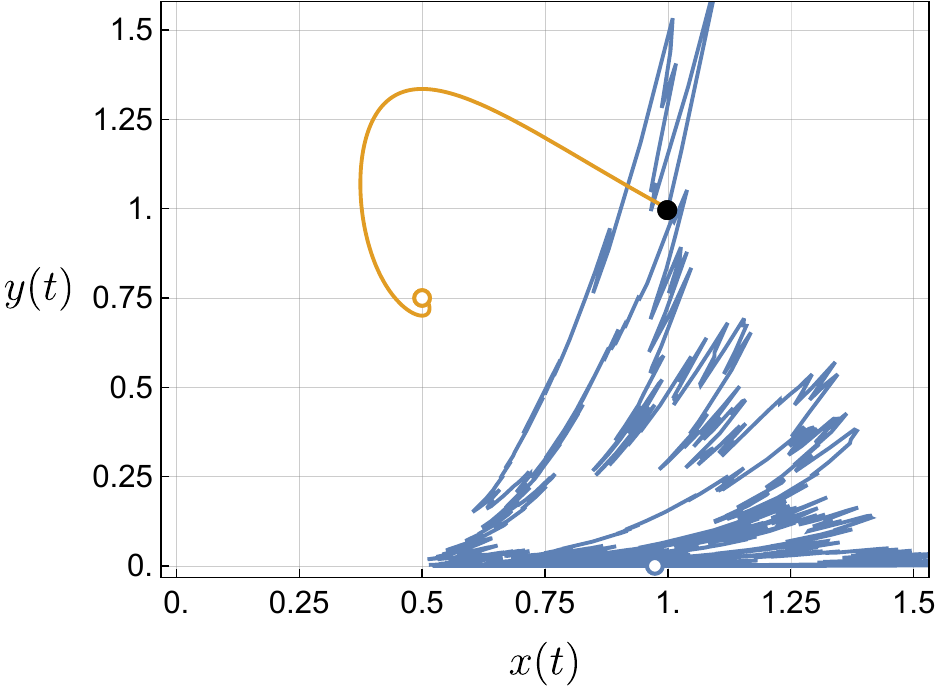}
   \caption{$\sigma_x^2 = 0.1$, $\sigma_y^2 = 2.1$.}\label{fig_Ey}
   \end{subfigure}
   \caption{Examples illustrating \cref{propA}, \cref{thmExt_x} and \cref{thmExt_y} for solutions with Functional Type II. Every path has $t \in [0,100]$, $(x(0),y(0)) = (1,1)$, $\beta = 1.5$, $\delta = 0.5$, $\alpha = 0.5$. In (a) the conditions of \cref{thmExt_x} for $\P(\mathcal{E}_x) = 1$ are satisfied, and hence by \cref{propA}, $y$ also becomes extinct. In (b) only the conditions for the eventual extinction of the predator $y$ are satisfied.}
   \label{fig_ExEy}
\end{figure}

As illustrated in \cref{fig_Ey}, the stochastic Gause model allows for situations where the predator $y$ eventually becomes extinct but the prey $x$ persists. We now turn our attention to the long-term behavior of the  prey in those cases. We prove, in fact, that as $t \to \infty$ and $y(t) \to 0$, the probability distribution of the prey $x(t)$ can converge to a distribution supported on $(0,\infty)$. Moreover, this limiting distribution equals precisely the invariant distribution of the one-dimensional stochastic logistic model.

\begin{Thm}\label{thmY0Xmean}
Let $(x(t), y(t))$, $t \geq 0$ be the solution to the model \cref{eq_StochGause} under the hypotheses of \cref{thm00} and assume $\sigma_{x}^{2} < 2$. Then, for any $y(0) \in (0, \infty)$, and conditional on $\mathcal{E}_y$, the prey process $x(t)$ converges in distribution to a random variable $\tilde{X} > 0$ with probability density function given by 
\begin{equation}\label{eq_distXtilde}
    f_{\tilde{X}}(x) \propto \frac{1}{ \sigma_x^2 x^2} \exp\left(\frac{2(1-x+\log(x))}{\sigma_x^2}\right), \quad x>0,
\end{equation}
\end{Thm}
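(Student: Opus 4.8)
The plan is to exploit the fact that, conditional on $\mathcal{E}_y$, the prey equation is an asymptotically autonomous perturbation of the one-dimensional stochastic logistic equation $\ud X = X(1-X)\ud t + \sigma_x X \ud B_x$, whose invariant law is exactly $\tilde X$. The first step is to pin down that invariant object. Viewing the logistic equation as a scalar diffusion on $(0,\infty)$ with drift $\mu(x)=x(1-x)$ and diffusion $\sigma^2(x)=\sigma_x^2 x^2$, I would compute the stationary (speed) density $\pi(x)\propto \sigma^{-2}(x)\exp\!\big(\int^x 2\mu/\sigma^2\big)$. Since $\int^x \tfrac{2y(1-y)}{\sigma_x^2 y^2}\,\ud y = \tfrac{2}{\sigma_x^2}(\log x - x)$, this reproduces $f_{\tilde X}$ of \cref{eq_distXtilde} up to the constant factor $e^{2/\sigma_x^2}$, a Gamma-type density $\propto x^{2/\sigma_x^2-2}e^{-2x/\sigma_x^2}$. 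A scale-function computation, $s'(x)\propto x^{-2/\sigma_x^2}e^{2x/\sigma_x^2}$, shows that under $\sigma_x^2<2$ both boundaries $0$ and $\infty$ are inaccessible (natural) and the speed measure is finite; hence the logistic diffusion is positive recurrent and ergodic, with $\mathcal L(X(t))\convd \tilde X$ from any positive start. The hypothesis $\sigma_x^2<2$ is precisely the integrability threshold of $\pi$ at the origin, i.e. the positive-recurrence condition.

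With the limit understood, I would sandwich $x(t)$ between two logistic diffusions driven by the same Brownian motion $B_x$. Because $\varphi(x)y\geq 0$, the comparison theorem gives $x(t)\leq X^+(t)$, where $X^+$ solves the rate-$1$ logistic equation; this is the upper bound, valid unconditionally. For the lower bound I work on $\mathcal{E}_y$: fix small $\epsilon>0$ and let $T_\epsilon$ be the (a.s.\ finite on $\mathcal{E}_y$) random time after which $y(t)<\epsilon$. Using $\varphi(x)\leq x$, for $t\geq T_\epsilon$ the prey drift dominates $x(1-\epsilon-x)$, so the comparison theorem yields $x(t)\geq X^-_\epsilon(t)$, where $X^-_\epsilon$ is the rate-$(1-\epsilon)$ logistic diffusion launched from $x(T_\epsilon)$ at time $T_\epsilon$. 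For $\epsilon$ small enough that $\sigma_x^2<2(1-\epsilon)$, the same scale/speed analysis shows $X^-_\epsilon$ is ergodic with invariant density $p_{1-\epsilon}(x)\propto x^{2(1-\epsilon)/\sigma_x^2-2}e^{-2x/\sigma_x^2}$, stochastically smaller than $f_{\tilde X}$.

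Combining the bounds, at any continuity point $c$ of the limiting CDFs one gets $F_{\tilde X}(c)\leq \liminf_t \P(x(t)\leq c\mid\mathcal{E}_y)\leq \limsup_t \P(x(t)\leq c\mid\mathcal{E}_y)\leq F_{\tilde X_\epsilon}(c)$, where $\tilde X_\epsilon$ has density $p_{1-\epsilon}$. Letting $\epsilon\to 0$, the densities $p_{1-\epsilon}\to f_{\tilde X}$ pointwise, so by Scheffé's lemma $\tilde X_\epsilon\convd\tilde X$ and the two envelopes collapse, giving $x(t)\convd\tilde X$ conditional on $\mathcal{E}_y$, which is the claim.

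The hard part will be reconciling the conditioning on the tail event $\mathcal{E}_y$ with the ergodic convergence of the comparison diffusions. The upper bound requires $\P(X^+(t)\leq c\mid\mathcal{E}_y)\to F_{\tilde X}(c)$, i.e.\ that the long-time marginal law of the logistic diffusion is asymptotically insensitive to conditioning on a tail event, which is a mixing statement that does not follow from bare ergodicity. The lower bound is worse: $X^-_\epsilon$ is launched at a random stopping time $T_\epsilon$ from a random position $x(T_\epsilon)$, so one must invoke convergence to equilibrium that is uniform over the initial data and compatible with the strong Markov property at $T_\epsilon$, while $\mathcal{E}_y$ still constrains the post-$T_\epsilon$ future. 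Making both rigorous calls for a quantitative, initial-condition-uniform rate of convergence to $\tilde X$ for the scalar logistic diffusion (a Harris- or geometric-ergodicity estimate), used together with the boundedness in probability of \cref{thm:bounded} to confine $x(T_\epsilon)$ to a compact set with high probability; this is the technical heart of the argument.
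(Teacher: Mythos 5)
Your strategy coincides with the paper's own proof: bound $x(t)$ above by the rate-$1$ stochastic logistic diffusion (valid unconditionally, since $\varphi\geq 0$), bound it below, after the random time at which $y$ drops under $\epsilon$, by a perturbed logistic diffusion (using $\varphi(x)\leq x$), identify the invariant laws of the comparison processes, and let $\epsilon\to 0$. Your scale/speed computation, the identification of $\sigma_x^2<2$ as the positive-recurrence threshold, and the stochastic ordering of the limit laws are all correct; in fact your choice of the rate-$(1-\epsilon)$ logistic as the lower process is cleaner than the paper's (which uses drift $\hat{x}(1-\hat{x})-\varphi(\hat{x})\epsilon$), because its invariant density is explicitly of Gamma type and the $\epsilon\to 0$ step is then immediate by Scheff\'e. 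Where you genuinely depart from the paper is the mode of convergence, and there you are right and the paper is not: the paper asserts that the comparison diffusions converge \emph{almost surely} to random variables $\tilde{X}$ and $\hat{X}_\epsilon$, and then sandwiches $\liminf_t x(t)$ and $\limsup_t x(t)$ between them. A positive recurrent diffusion on $(0,\infty)$ hits every level at arbitrarily large times, so in fact $\liminf_t \tilde{x}(t)=0$ and $\limsup_t \tilde{x}(t)=\infty$ almost surely; the convergence is in distribution only, which is what your version uses throughout.

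The genuine gap is the one you flag and leave open: conditioning on $\mathcal{E}_y$ does not automatically commute with distributional convergence of the comparison processes, because $\mathcal{E}_y$ is built from the same noise $(B_x,B_y)$ that drives them. This can be closed without the heavy uniform-ergodicity machinery you anticipate. Since $\mathcal{E}_y$ is a tail event, L\'evy's $0$--$1$ law gives $\P(\mathcal{E}_y\mid\mathcal{F}_T)\to\IND{\mathcal{E}_y}$ a.s., so $\mathcal{E}_y$ can be approximated by $\mathcal{F}_T$-measurable events; applying the Markov property of $X^+$ at time $T$ and the pointwise convergence of the scalar logistic semigroup to its stationary law (plus bounded convergence) then yields $\P(X^+(t)\leq c\mid\mathcal{E}_y)\to F_{\tilde{X}}(c)$. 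The lower bound is handled the same way at the stopping time $T_\epsilon$, using that $x(T_\epsilon)\in(0,\infty)$ a.s.\ (so its law is automatically tight in $(0,\infty)$) together with convergence to equilibrium uniform over compact sets of initial points, which one-dimensional ergodic diffusions satisfy by monotone coupling. So your outline is completable, but as written it is a program rather than a proof at its crucial step. For what it is worth, the paper does not close this gap either: it bypasses it with the erroneous almost-sure-convergence claim and leaves a literal placeholder ``\textbf{(.?.)}'' exactly where the justification that $\hat{X}_\epsilon$ converges in law to $\tilde{X}$ should appear; carrying out your program would in fact repair the paper's argument.
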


\begin{proof} 
Let $\epsilon\in(0,1-\frac{1}{2}\sigma_x^2)$ and consider the  following systems of stochastic differential equations:
\begin{align}
    \ud \tilde{x} = (\tilde{x}(1-\tilde{x}))\ud t + \sigma_x \tilde{x} \, \ud B_x(t), &\quad 
    \ud\tilde{y} = (\beta \tilde{x} \tilde{y} - \delta \tilde{y})\ud t + \sigma_y \, \tilde{y} \ud B_y(t), \label{eq_xtilde}\\
    \ud\hat{x} = (\hat{x}(1-\hat{x}) - \varphi(\hat{x})\epsilon)\ud t + \sigma_x \hat{x} \, \ud B_x(t), &\quad
    \ud\hat{y} = (\beta \hat{x} \hat{y} - \delta \hat{y})\ud t + \sigma_y \, \hat{y} \ud B_y(t), \label{eq_xhat}
\end{align}
 where $B_x$ and $B_y$ are the same Brownian motions in \cref{eq_StochGause}. Note that the equations for $\tilde{x}$ and $\hat{x}$ do not depend on their corresponding of $\tilde{y}$ and $\hat{y}$. In, fact the equation for $\tilde{x}$ is that of the one-dimensional stochastic logistic model.    

Let $\{(\tilde{x}(t),\tilde{y}(t)), t\geq 0\}$ be the solution to \cref{eq_xtilde} with $(\tilde{x}(0),\tilde{y}(0)) = (x(0),y(0))$. Since $\sigma_x^2<2$, it is known that $\tilde{x}(t) \to \tilde{X}$  with probability one as $t \to \infty$, where $\tilde{X}$ has density given by \cref{eq_distXtilde} (see \cite{klebanerintroduction} page 171). Moreover, since $\varphi(x)\geq 0$ for all $x$, the comparison theorem of \cite{GEI199423} implies that $x(t)\leq \tilde{x}(t)$ for all $t\geq 0$. It follows that $\limsup_{t\to\infty}x(t)\leq \tilde{X}$ with probability one. 

To establish the behavior of $\liminf x(t)$, fix $T>0$ and define the event $A_{\epsilon,T}=[\sup_{t\geq T} y(t)\leq \epsilon]$. Note the dependence of \cref{eq_xhat} on $\epsilon$ and let $\{(\hat{x}_{\epsilon,T}(t),\hat{y}_{\epsilon,T}(t)), t\geq T\}$ be its solution starting at $(\tilde{x}_{\epsilon,T}(T),\tilde{y}_{\epsilon,T}(T)) = (x(T),y(T))$. Applying the strong Markov property and  the comparison theorem on $A_{\epsilon,T}$, gives that
$\P(\hat{x}_{\epsilon,T}(t) \leq x(t) \text{ for all } t \geq T |A_{\epsilon,T})=1$. Similarly, since $0 < \sigma_x^2 < 2(1-\epsilon)$ the process $\hat{x}_{\epsilon,T}(t)$ converges almost surely as $t\to \infty$ to a random variable $\hat{X}_\epsilon$ supported on $(0,\infty)$. Furhtermore,
\begin{equation}\label{eq_limsxT}
        \P \left(\hat{X}_\epsilon\leq \liminf_{t\to\infty}x(t)\leq \limsup_{t\to\infty}x(t) \leq \tilde{X} \middle| A_{\epsilon,T} \right) =1.
\end{equation}
Letting $T \to \infty$ in \cref{eq_limsxT} we obtain
\begin{equation}\label{eq_limsx}
        \P\left(\hat{X}_\epsilon\leq \liminf_{t\to\infty}x(t)\leq \limsup_{t\to\infty}x(t) \leq \tilde{X} \middle| \limsup_{t \to \infty} y(t) \leq \epsilon \right) =1.
\end{equation}
As $\epsilon \to 0$, from \textbf{(.?.)} $\hat{X}_\epsilon$ converges in distribution to $\tilde{X}$ and $[\limsup_{t \to \infty} y(t) \leq \epsilon] \downarrow \mathcal{E}_y$. This implies that conditional to $\mathcal{E}_y$, $\lim_{t\to\infty}x(t)$ exists with probability one and has the same distribution as $\tilde{X}$.
\end{proof}

\cref{thmY0Xmean} indicates that under the extinction of the predator, if the prey's noise is small enough to allow for its persistence, the prey will behave precisely like in the situation modeled by the one-dimensional stochastic logistic equation. In particular, for large $t$, $x(t)$ exhibits a noise-induced reduction of its carrying capacity. See \cite{klebaner2012introduction}. Namely, the asymptotic behavior of the prey mean abundance goes from  $x(t) \to 1$ in the deterministic case to 
\begin{equation}\label{eq_meanEy}
    \lim_{t \to \infty} \EXP(x(t) | \mathcal{E}_y) = 1- \frac{\sigma_x^2}{2}
\end{equation}
in the stochastic case.

\begin{figure}
    \centering
   \begin{subfigure}{7cm}
   \includegraphics[width=\linewidth]{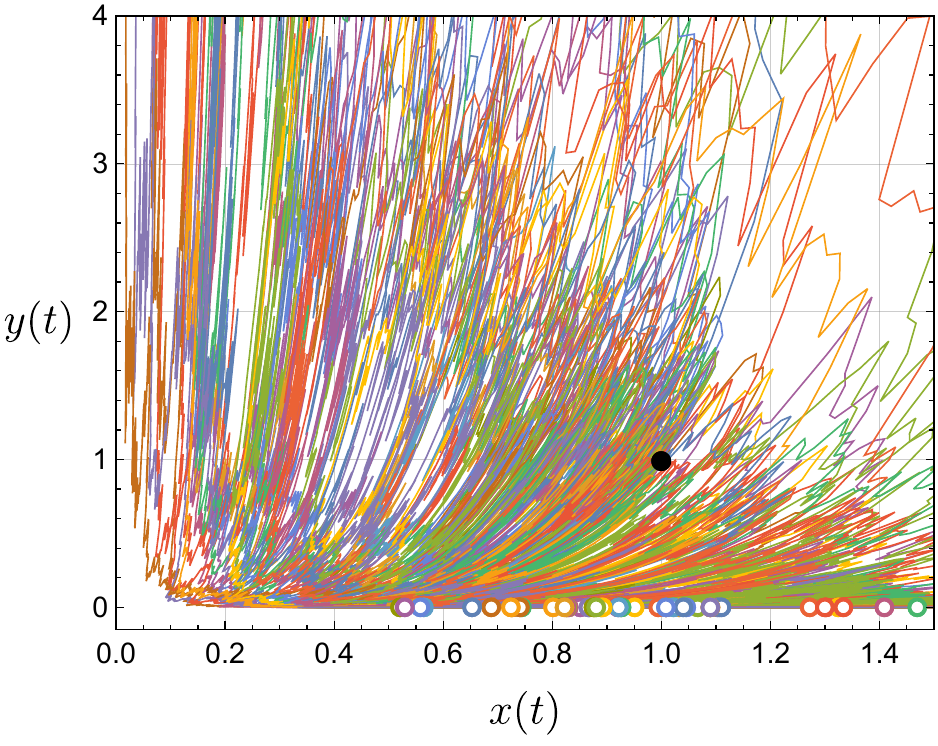}
   \caption{}
   \end{subfigure}~~
   \begin{subfigure}{7cm}
   \includegraphics[width=\linewidth]{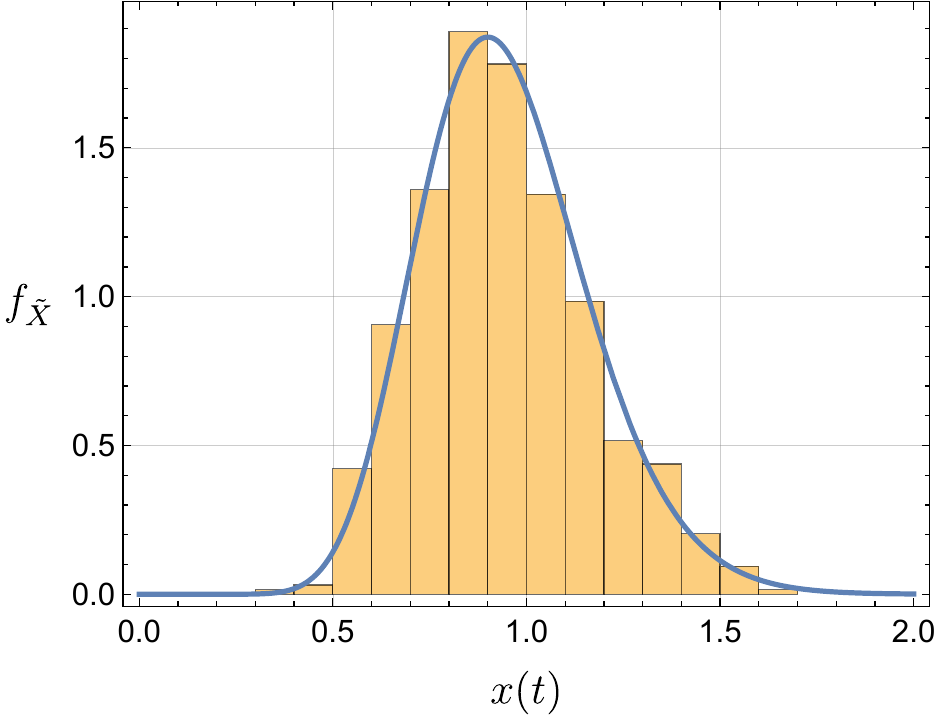}
   \caption{}
   \end{subfigure}
   \caption{Illustration of \cref{thmY0Xmean} for the same problem as \cref{fig_Ey} where $\mathcal{E}_y$ is sure. (a) 50 realizations of the solution for $t \in [0,1000]$. The final values of $x((t),y(t))$ are shown with open markers. (b) Comparison between the sample histogram of the final value of $x(t)$ for 640 realizations, with the density $f_{\tilde{X}}$ in \cref{eq_distXtilde}}.
   \label{fig_EyInv}
\end{figure}

\section{Invariant distribution for type I functional response}\label{sec_invariant}

We now specialize in the case of Type I functional response, namely $\varphi(x) = x$ in \eqref{eq_StochGause}. Note from \eqref{cond_Determ} that for the corresponding deterministic case, the point
\begin{equation}\label{eq_xystar}
    x_* = \frac{\delta}{\beta}, \quad y_* = 1-\frac{\delta}{\beta}
\end{equation}
is a coexistence equilibrium solution that is positive and asymptotically stable if $\delta < \beta$. We will provide conditions on model parameters for the stochastic model \cref{eq_StochGause} to have a unique invariant distribution supported on $\R^2_+$. This corresponds to a situation of stochastic coexistence of prey and predator, and precludes all the eventual extinction scenarios covered in \cref{sec:extinction}. 

\begin{Thm}{\label{thmStatDist}}
    Consider Gause's model \cref{eq_StochGause} with $\varphi(x)=x$, $(x(0),y(0)) \in \R^2_+$ and parameters $\beta,\delta$ such that
    \begin{equation}\label{eq_condbd}
        0<\frac{\beta}{\beta+1} < \delta < \min\left\{\beta,\frac{\beta}{|\beta-1|}\right\}.
    \end{equation}
    If $\sigma_x^2$, $\sigma_y^2$ satisfy that
    \begin{align}
       \frac{\beta ^2 (\beta +4 \delta ) }{\delta  ((\beta -1)
   \delta +\beta )}\sigma _x^2+\frac{(\beta  (\beta +2)-2 \delta ) (\beta
   -\delta ) }{\delta ^2 ((\beta -1) \delta +\beta )}\sigma _y^2&<1,\label{eq_condsxy1}\\
   \frac{\beta ^2 \delta  (\beta +2 \delta ) }{(\beta
   -\delta )^2 (\beta  (1-\delta )+\delta )}\sigma _x^2+\frac{(\beta  (\beta
   +4)-4 \delta ) }{(\beta -\delta ) (\beta  (1-\delta
   )+\delta )}\sigma _y^2&<1 \label{eq_condsxy2},
    \end{align}
    then the model has a unique stationary distribution supported on $\R^2_+$.
\end{Thm}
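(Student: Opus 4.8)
The plan is to apply Khasminskii's criterion for a unique stationary distribution of a nondegenerate diffusion (see \cite{khasminskii2012stability}). Since \cref{thm00} already guarantees a unique global solution that remains in $\R^2_+$, it remains to verify two ingredients: (i) nondegeneracy of the diffusion on $\R^2_+$, which together with the smoothness of the coefficients makes the process strong Feller and irreducible and therefore forces \emph{uniqueness} of any invariant measure, with support equal to $\R^2_+$; and (ii) a Foster--Lyapunov drift condition, namely a nonnegative $C^2$ function $V$, proper on $\R^2_+$, and a compact set $D \subset \R^2_+$ such that $L[V] \leq -1$ on $\R^2_+ \setminus D$, with $L$ the operator in \cref{operator}. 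Ingredient (i) is not the obstacle: on $\R^2_+$ the diffusion matrix is $\mathrm{diag}(\sigma_x^2 x^2, \sigma_y^2 y^2)$, which is positive definite whenever $x,y>0$, so the transition density is smooth and strictly positive.

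The heart of the argument is the Lyapunov construction in (ii). The natural first candidate is the Goh--Volterra function attached to the coexistence equilibrium \eqref{eq_xystar},
\[
  U(x,y) = \beta\Big(x - x_* - x_*\log\tfrac{x}{x_*}\Big) + \Big(y - y_* - y_*\log\tfrac{y}{y_*}\Big),
\]
which is nonnegative, proper on $\R^2_+$, and vanishes only at $(x_*,y_*)$. Using $x_* + y_* = 1$ and $\delta = \beta x_*$ one computes
\[
  L[U] = -\beta\,(x - x_*)^2 + \tfrac12\big(\beta x_* \sigma_x^2 + y_* \sigma_y^2\big).
\]
Thus $U$ supplies coercivity of $-L[U]$ as $x\to\infty$ and, since it also blows up as $x\to 0$, localizes the prey; but $L[U]$ stays bounded in the three remaining boundary directions $x\to 0$, $y\to 0$ and $y\to\infty$ (the $y$-nullcline $x=x_*$ carries no restoring force in $y$), so $U$ alone cannot satisfy the drift condition.

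To repair this I would add correction terms providing coercivity in the missing directions, for instance power terms $y^{p}$, $y^{-p}$ (or equivalently quadratic and cross terms in the logarithmic coordinates $\log\tfrac{x}{x_*}$, $\log\tfrac{y}{y_*}$) with carefully chosen exponents and weights. Applying $L$ to such terms produces, besides useful drift, positive second-order Itô contributions proportional to $\sigma_x^2$ and $\sigma_y^2$; the role of the smallness hypotheses \eqref{eq_condsxy1}--\eqref{eq_condsxy2} is precisely to force these positive contributions to be dominated by the negative drift, so that the total $L[V]$ becomes a negative-definite quadratic form in $(x-x_*,y-y_*)$ near the equilibrium plus a term tending to $-\infty$ at the boundary. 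I expect the two inequalities to encode the two sign conditions (two completed squares, or two leading principal minors) guaranteeing this negative-definiteness, while the algebraic hypothesis \eqref{eq_condbd} ensures that $(x_*,y_*)\in\R^2_+$ (so $\delta<\beta$) and that the denominators $(\beta-1)\delta+\beta$ and $\beta(1-\delta)+\delta$, together with the weights entering the construction, are strictly positive.

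The main obstacle is this construction: selecting the correction terms and their weights so that, \emph{simultaneously} in all four boundary regimes, $L[V]$ is driven below $-1$, with the positive noise contributions controlled exactly by \eqref{eq_condsxy1}--\eqref{eq_condsxy2}; this is the delicate, computation-heavy part. Once such a $V$ is exhibited and shown to be proper on $\R^2_+$, its sublevel sets are compact in $\R^2_+$, the drift condition $L[V]\leq -1$ holds outside a large compact $D$, and Khasminskii's theorem yields the unique stationary distribution supported on $\R^2_+$, completing the proof.
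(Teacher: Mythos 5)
Your proposal sets up the right framework (Khasminskii's criterion: nondegeneracy on $\R^2_+$ plus a Foster--Lyapunov drift condition, with regularity already supplied by \cref{thm00}), and your computation of $L[U]$ for the Goh--Volterra function is correct and matches the paper's $V_1$ up to the factor $\beta$: the cross terms cancel and one is left with $-\beta(x-x_*)^2$ plus a noise constant, with no restoring force in $y$. But at exactly this point the proof stops and becomes a plan: you say you ``would add correction terms'' ($y^p$, $y^{-p}$, or quadratic/cross terms in logarithmic coordinates) and that you ``expect'' the hypotheses \eqref{eq_condsxy1}--\eqref{eq_condsxy2} to emerge as sign conditions, deferring the construction as ``the delicate, computation-heavy part.'' That deferred part \emph{is} the theorem: the hypotheses are explicit algebraic inequalities that can only be derived from one concrete Lyapunov function and one concrete estimate of $L[V]$, and no candidate correction term is ever written down, no $L[V]$ for the full $V$ is computed, and no link between \eqref{eq_condbd}--\eqref{eq_condsxy2} and the drift condition is established. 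This is a genuine gap, not a stylistic omission.

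For comparison, the paper's correction term is the single quadratic $V_2(x,y)=\tfrac12\bigl[(x-x_*)+\tfrac{1}{\beta}(y-y_*)\bigr]^2$, chosen because the combination $x+y/\beta$ has drift $x(1-x)-\tfrac{\delta}{\beta}y$, i.e.\ free of the interaction term $xy$; applying the operator \eqref{operator} and completing squares on the cross term yields a bound of the form $L[V_1+V_2]\leq -\ell_1(x-x_*)^2-\ell_2(y-y_*)^2+\ell$, with $\ell,\ell_1,\ell_2$ explicit in $\beta,\delta,\sigma_x,\sigma_y$. The role of \eqref{eq_condsxy1}--\eqref{eq_condsxy2} is then \emph{not} (as you guessed) a pair of principal-minor/negative-definiteness conditions; together with \eqref{eq_condbd} they are equivalent to $\ell < \min\{\ell_1 x_*^2,\ell_2 y_*^2\}$, i.e.\ to the \emph{geometric} statement that the ellipsoid $\{\ell_1(x-x_*)^2+\ell_2(y-y_*)^2\leq \ell\}$ where dissipation may fail is compactly contained in $\R^2_+$. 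Once that holds, a neighborhood $U$ of this ellipsoid with $\bar U\subset\R^2_+$ serves as Khasminskii's domain: $L[V]\leq -\epsilon$ on $\R^2_+\setminus U$ (this single quadratic bound covers all four boundary regimes at once, so no properness of $V$ or blow-up of $-L[V]$ at the boundary is needed), and the diffusion matrix $\mathrm{diag}(\sigma_x^2x^2,\sigma_y^2y^2)$ is uniformly elliptic on $U$. Your plan could in principle be completed along these lines, but as written it omits the construction that produces the stated conditions.
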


\begin{proof} 
We establish the existence of an invariant method by the method of Lyapunov functions \cite{mao2011stationary}. Let $V(x,y)=V_{1}(x,y)+V_{2}(x,y)$ with
\begin{align*} 
V_{1}(x,y) &= x-x_{*}-x_{*}\log\left(\dfrac{x}{x_{*}}\right)+\left[y_{*}-y_{*}-\dfrac{1}{\beta}\log \left(\dfrac{y}{y_{*}}\right)\right]\\ 
V_{2}(x,y) &= \dfrac{1}{2}\left[(x-x_{*})+\dfrac{1}{\beta}(y-y_{*})\right]^{2}
\end{align*}
Let $L$ be the Lyapunov operator in \cref{eqA_L}. Then
\begin{align*}
L[V_{1}](x,y) &= -(x-x_{*})^{2}+\dfrac{1}{2}\sigma_{x}^{2} x_{*}+\dfrac{1}{2\beta}\sigma_{y}^{2} y_{*},\\
L[V_{2}](x,y) &= (\sigma_{x}^{2}+y_{*})(x-x_{*})^{2}-\left(x_{*}-\dfrac{y_{*}}{\beta}\right)(x-x_{*})(y-y_{*})\\& \quad -\left(\dfrac{x_{*}}{\beta}-\dfrac{\sigma_{y}^{2}}{\beta^{2}}\right)(y-y_{*})^{2}+\sigma_{x}^{2}x_{*}^{2}+\dfrac{\sigma_{y}^{2}}{\beta^{2}}y_{*}^{2}.
\end{align*}
The conditions for $\beta,\delta, \sigma_x$ and $\sigma_y$ in \cref{eq_condbd,eq_condsxy1,eq_condsxy2} can be equivalently written in terms of conditions on $x_*$ and $y_*$ as
\begin{equation}\label{condxyl}
    x_* - \frac{y_*}{\beta} > 0, \quad \ell(x,y) < \min\{\ell_1(x,y) x_*^2,\ell_2(x,y) y_*^2\},
\end{equation}
where
\begin{align}
    \ell(x,y) &= \sigma_x^2 x_*\left(x_* + \dfrac{1}{2}\right) + \frac{\sigma_y^2}{\beta} \left(\dfrac{y_*}{\beta} + \frac{1}{2}\right), \\
    \ell_1(x,y) &=  1-\sigma_x^2-y_* -\frac{1}{2}\left(x_* - \frac{y_*}{\beta}\right), \\
    \ell_2(x,y) &=  \frac{x_*}{\beta} - \frac{\sigma_y^2}{\beta^2} -\frac{1}{2}\left(x_* - \frac{y_*}{\beta}\right).
\end{align}

We first bound $L[V_2]$. Note that for any $b\geq 0$, $a,c \in \R$,
\begin{equation}
 ax^2-bxy-cy^2<\left(a+\dfrac{b}{2}\right)x^2-\left(c-\dfrac{b}{2}\right)y^2.
\end{equation}
Making $a=\sigma_{x}^{2}+y_{*}$, $c=\dfrac{x_{*}}{\beta}-\dfrac{\sigma_{y}^{2}}{\beta^{2}}$, and $b=x_{*}-\dfrac{y_{*}}{\beta}$, which is positive by \cref{condxyl}, we obtain
\begin{equation*}
    L[V_{2}]\leq \left(\sigma_{x}^{2}+y_{*}-\dfrac{y_{*}}{2\beta}+\dfrac{x_{*}}{2}\right)(x-x_{*})^{2}-\left(\dfrac{x_{*}}{\beta}-\dfrac{\sigma_{y}^{2}}{\beta^{2}}+\dfrac{y_{*}}{2\beta}-\dfrac{x_{*}}{2}\right)(y-y_{*})^{2}+\sigma_{x}^{2}x_{*}^{2}+\dfrac{\sigma_{y}^{2}}{\beta^{2}}y_{*}^{2}.
\end{equation*}
Adding $L[V_1]$ to this bound for $L[V_2]$ gives 
\begin{equation*}
L[V](x,y) \leq -\ell_1(x,y)(x-x_{*})^{2}-\ell_2(x,y)(y-y_{*})^{2}+ \ell(x,y).
\end{equation*}

Note that \cref{condxyl} implies that the ellipsoid $\ell_1(x,y)(x-x_*)^2 + \ell_2(x,y)(y-y_*)^2 \leq \ell$ is entirely contained in $\R^2_+$. We then can choose a neighborhood $U$ of such ellipsoid, and $\epsilon>0$, such that $\bar{U} \in \R^2_+$, and $L[V](x,y) \leq -\epsilon$ for all $(x,y) \in \R^2_+ \setminus U$. Furthermore, the smallest eigenvalue of the diffusion matrix is $\min\{\sigma_x^2 x^2,\sigma_y^2 y^2\}$, which is bounded away from zero in $U$. The existence of the invariant distribution thus follows from Theorem 4.1 in \cite{khasminskii2012stability}.
\end{proof}

Note that the condition in \cref{eq_condbd} for $\beta$ and $\delta$ ensures that the coefficients of $\sigma_x^2$, $\sigma_y^2$ in \cref{eq_condsxy1,eq_condsxy2} are all positive. Therefore, for each admissible pair $(\beta,\delta)$, the set of values $(\sigma_x,\sigma_y)$ for which we can guarantee the existence of an invariant distribution is given by the intersection of two ellipses. This is illustrated in \cref{fig_RegInv}.

\begin{figure}
    \centering
   \begin{subfigure}{7cm}
   \includegraphics[width=\linewidth]{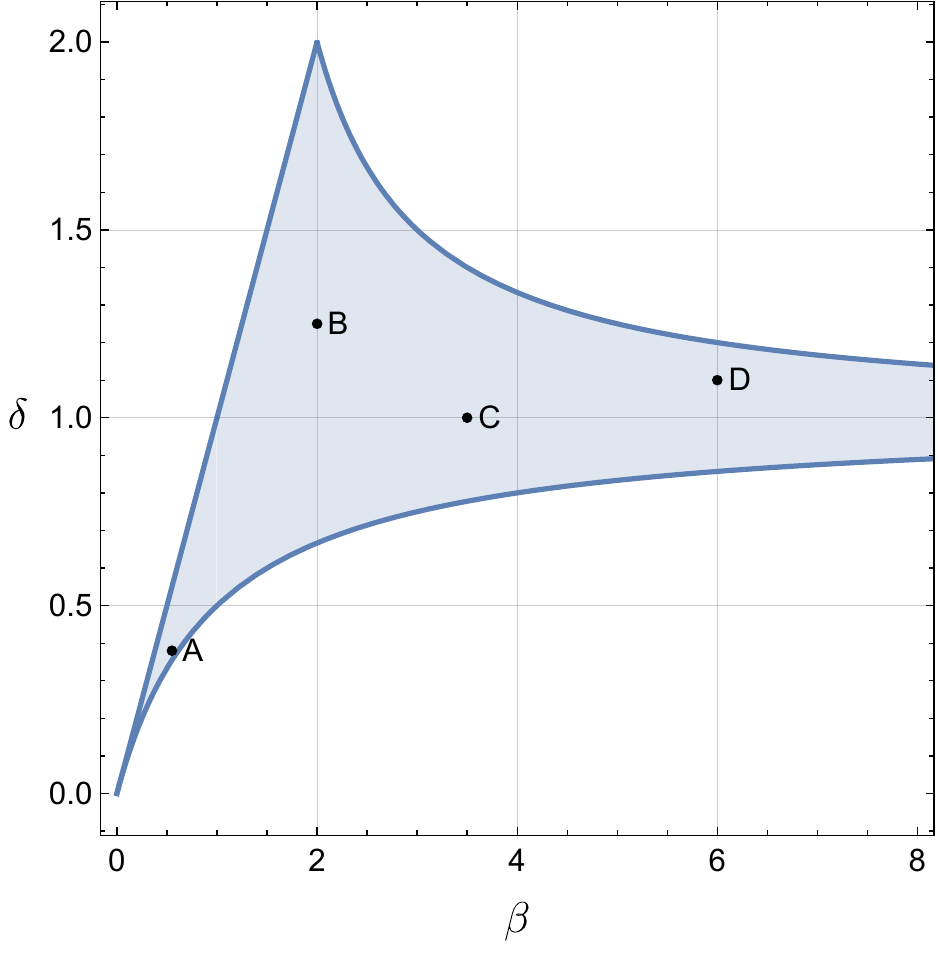}
   \caption{}\label{fig_RegInva}
   \end{subfigure}\hspace{0.25cm}
   \begin{subfigure}{7cm}
   \includegraphics[width=\linewidth]{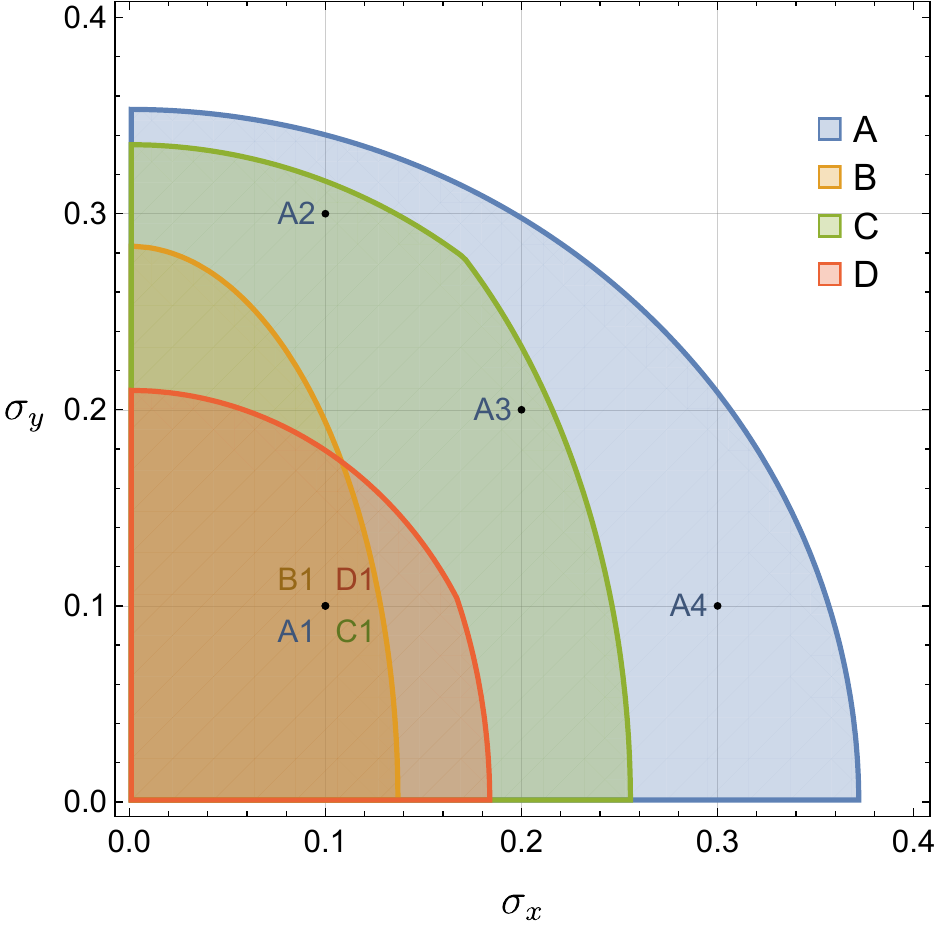}
   \caption{}\label{fig_RegInvb}
   \end{subfigure}
   \caption{Illustration of the conditions of \cref{thmStatDist} for existence of an invariant distribution. (a) Region in the $\beta$-$\delta$ plane determined by \cref{eq_condbd}. The labeled points are: $\mathsf{A} = (0.55,0.38)$, $\mathsf{B} =(2,1.25)$, $\mathsf{C} =(3.5,1)$ and $\mathsf{D} =(6,1.1)$. (b) Regions in the $\sigma_x$-$\sigma_y$ plane determined by \cref{eq_condsxy1,eq_condsxy2} for each of the four points labeled in panel (a).}
   \label{fig_RegInv}
\end{figure}

Under the conditions of \cref{thmStatDist}, the solution $(x(t),y(t))$ to \cref{eq_StochGause} is an ergodic stochastic process. Both the predator and the prey persist with probability one and their populations oscillate around their invariant mean which. See \cref{fig_ErgPaths}. Note that in \cref{fig_ErgPathsb} the conditions for eventual extinction are not met either and both species seem to persist while having highly intermittent paths. \cref{thmStatDist} does not preclude the existence of an invariant distribution in this case, but our numerical computations did not provide evidence of convergence. 

\begin{figure}
    %\centering
   \begin{subfigure}{7cm}
   \includegraphics[scale=0.42]{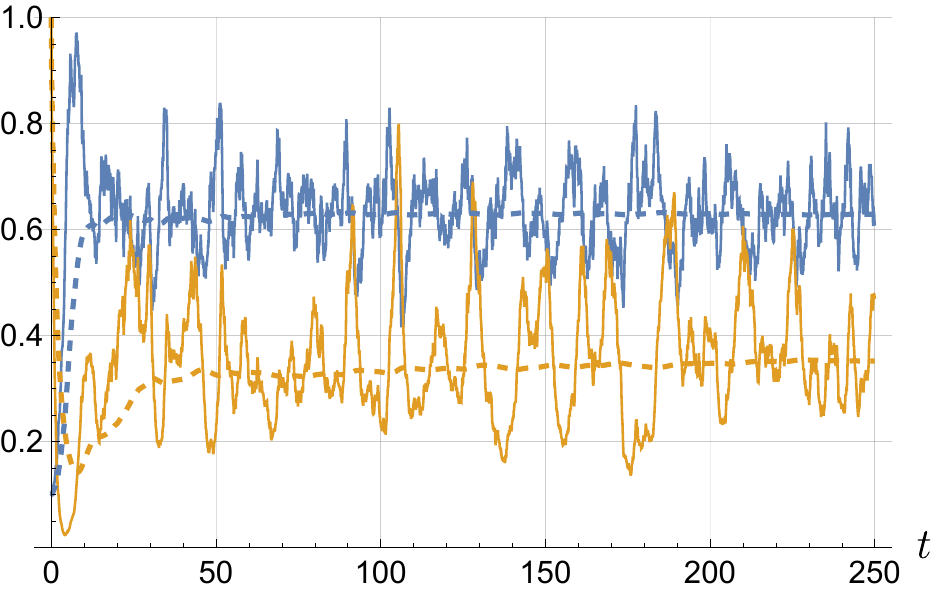}
   \caption{$\beta = 2$, $\delta = 1.25$, $\sigma_x^2=\sigma_y^2 = 0.1$.}
   \end{subfigure}\hspace{-0.2cm}
   \begin{subfigure}{7cm}
   \includegraphics[scale=0.42]{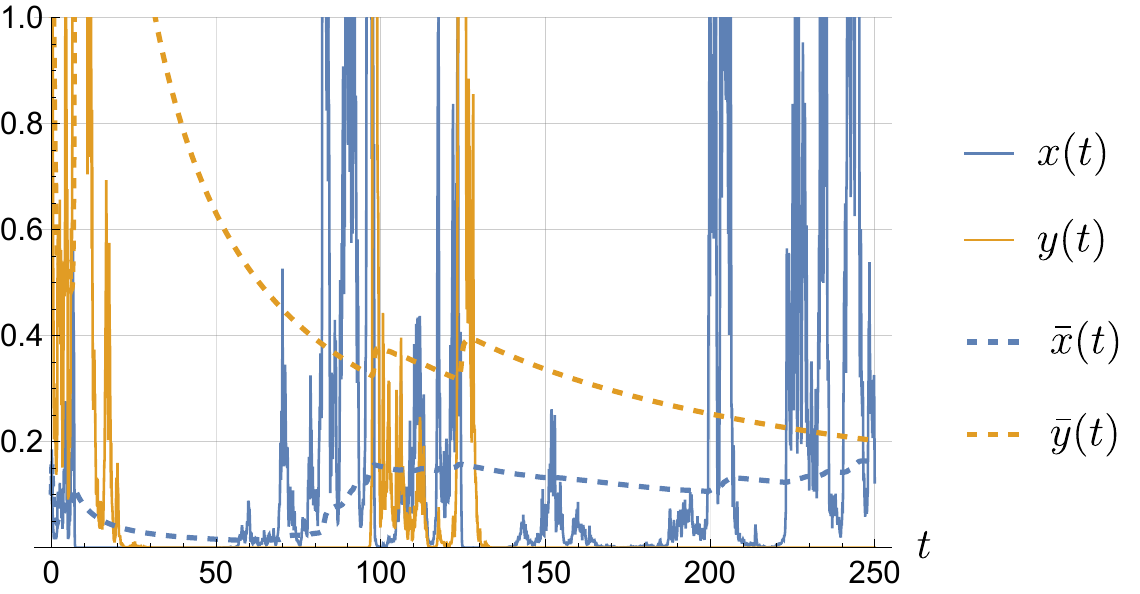}
   \caption{$\beta = 4$, $\delta = 0.1$, $\sigma_x^2=\sigma_y^2 = 1.2$.}\label{fig_ErgPathsb}
   \end{subfigure}
   \caption{Examples of solutions that (a) satisfy the conditions in \cref{thmStatDist} for existence of an invariant distribution, and (b) that do not. Dashed curves denote the Ces\`aro means, namely $\bar{x}(t):= \frac{1}{t} \int_0^t x(s) \ud s$ and similarly for $\bar{y}$, which converge almost surely as $t \to \infty$ whenever a unique invariant distribution exists. In both cases the functional response is of type I, the initial condition is $(x(0),y(0)) = (0.1,1)$ and the solutions are computed for $t \in [0,250]$.}
   \label{fig_ErgPaths}
\end{figure}

The dependence of the invariant distribution on the values of the model parameters is illustrated next. \cref{fig_invHist1} show the dependence on the parameters $\beta,\delta$ for fixed but small values of the noise intensities $\sigma_x, \sigma_y$. Note that the invariant distribution approximately concentrates around the equilibrium solution $(x_*,y_*)$ of the deterministic model, which is stable because $\delta <\beta$. It is apparent that the covariance matrix of the invariant distribution is anisotropic and depends strongly on the values of $\beta$ and $\delta$. \cref{fig_invHistA} shows the sensibility of the invariant distribution to the values of $\sigma_x, \sigma_y$ for a fixed pair of $(\beta, \delta)$. Consistently with the observation in \cref{eq_meanEy}, the difference between the invariant mean and the equilibrium $(x_*,y_*)$ becomes significant for larger values of the noise intensity. 

\begin{figure}
    \centering
    \includegraphics[scale=0.5]{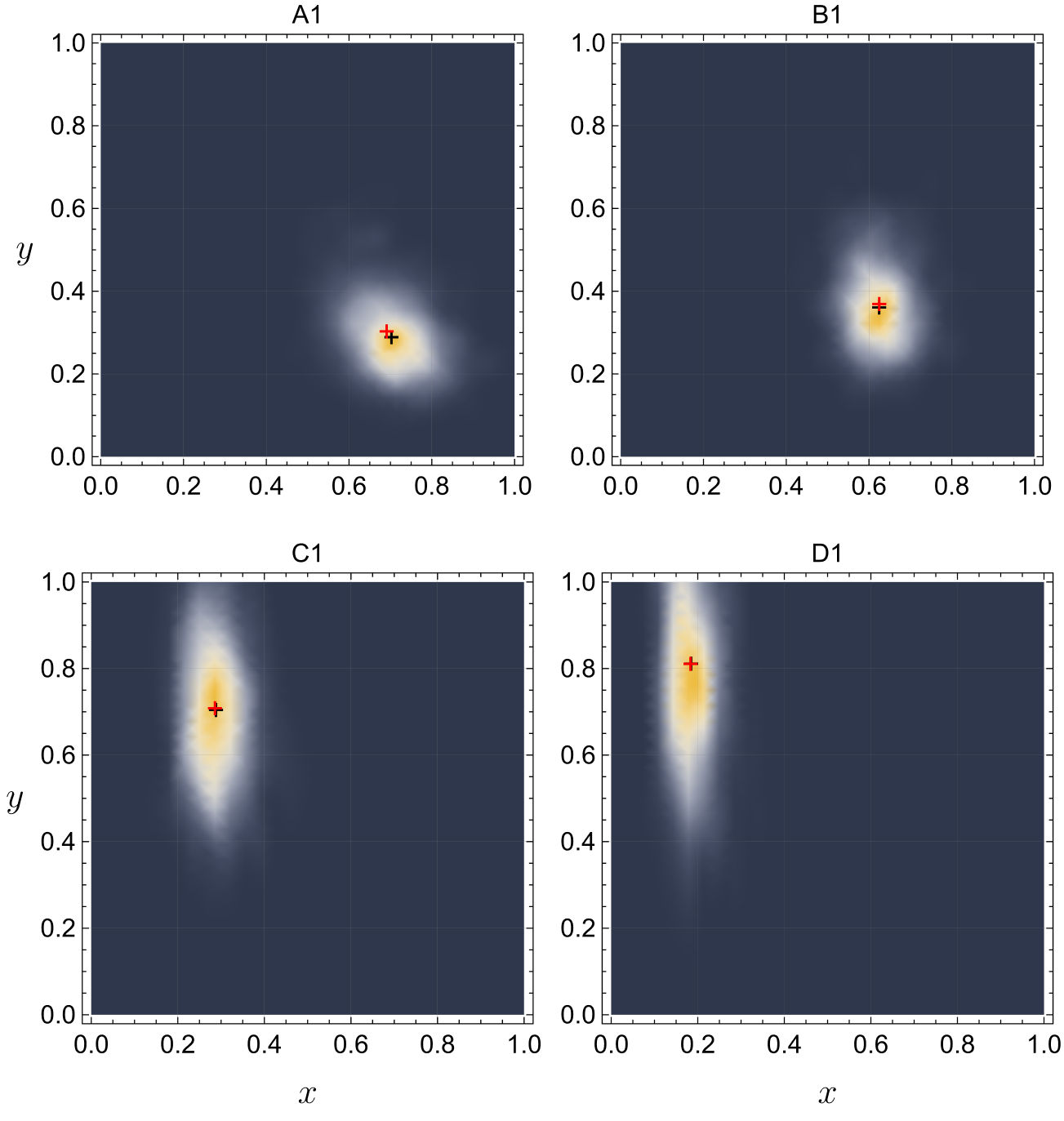}
    \caption{Estimated sample histogram of $(x(t),y(t))$ for $t=2000$ from 1000 realizations with $(x(0),y(0)) = (1,1)$, $\sigma_x^2 = \sigma_y^2 = 0.1$ and four cases of $(\beta,\delta)$ corresponding to the labels in \cref{fig_RegInvb}. The red markers indicate the equilibrium $(x_*,y_*)$ of the deterministic system, and the black shows the estimated mean of $(x(t),y(t))$ at the final time.}
    \label{fig_invHist1}
\end{figure}

\begin{figure}
    \centering
    \includegraphics[scale=0.5]{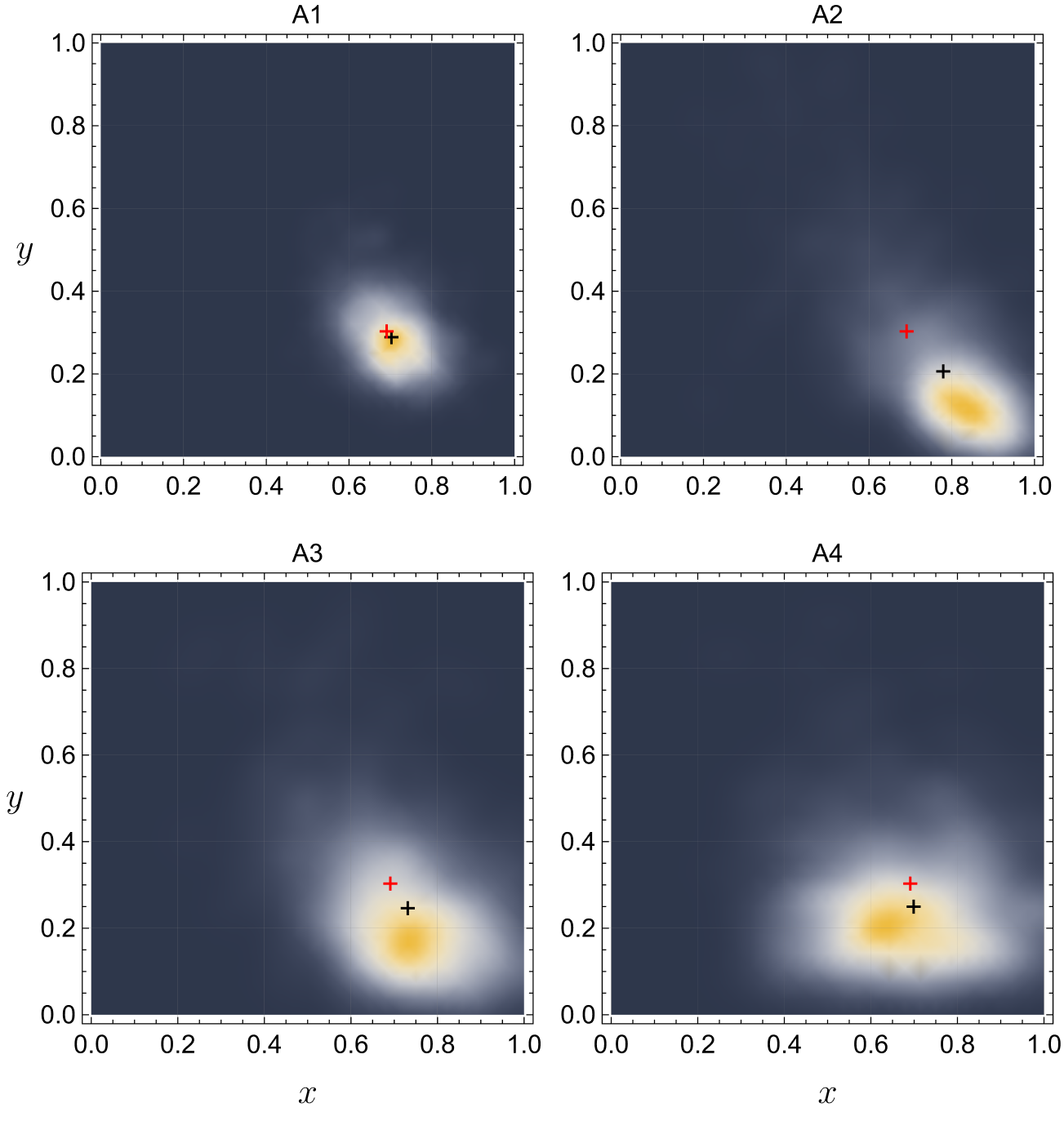}
    \caption{Estimated sample histogram of $(x(t),y(t))$ for $t=2000$ from 1000 realizations with $(x(0),y(0)) = (1,1)$, $(\beta,\delta)=\mathsf{A}$ in \cref{fig_RegInva} and the four different values of $(\sigma_x,\sigma_y)$ as labeled in \cref{fig_RegInvb}. The red markers indicate the equilibrium $(x_*,y_*)$ of the deterministic system, and the black shows the estimated mean of $(x(t),y(t))$ at the final time.}
    \label{fig_invHistA}
\end{figure}

\section{Conclusions and Discussion}

We have conducted a comprehensive analysis of a stochastic version of the Gause population model with linear additive noise. The model considers a prey population \(x\) subject to logistic growth and predation by a predator \(y\), whose functional response can be very general. The specific form of the diffusion term used can be interpreted as modeling the effect of uncertain population parameters with Gaussian errors. In \cref{sec:prel}, we show that the resulting model in \cref{eq_StochGause} is `natural' in the sense that, with probability one, the populations remain bounded and non-negative for all time.

Our results highlight the effect that uncertainty can have on the qualitative properties of predator-prey systems. Perhaps the most intriguing effect is the possibility of noise-induced extinctions, as discussed in \cref{sec:extinction}. We demonstrate in \cref{thmExt_x} and \cref{propA} that sufficiently large noise in the prey's dynamics \(x\) can drive \textit{both} species to eventual extinction, a scenario impossible in the deterministic model. While this result imposes a threshold for \(\sigma_x\) that may not be directly applicable to real ecosystems, it suggests caution when assigning diffusion coefficients in stochastic predator-prey models.

A more nuanced issue is the possibility of eventual extinctions induced by the predator's noise intensity \(\sigma_y\). Our \cref{thmExt_y} indicates that if \(\beta > \delta\), there will always be a level of noise that drives \(x\) to eventual extinction, regardless of the type of coexistence (periodic or otherwise) exhibited in the deterministic case. Moreover, for Type II and III responses, if \(\beta > \beta^* = (1+\alpha) \delta\) and \(\alpha\) is very small, even very small values of \(\sigma_y^2\) are sufficient to drive the predator to extinction. The exclusion of the unbounded functional response of Type I from \cref{thmExt_y} stems from a technical issue when using the comparison theorem with respect to a univariate process. Our numerical simulations lead us to conjecture that the result holds in this case as well, although different proof techniques would be necessary.

\cref{thmY0Xmean} provides a conditional result in the case of an eventual extinction of the predator, whether this extinction is predicted by \cref{thmExt_y} or not. It shows that, in the limit, the prey population behaves as in a single-species system, retaining no information from its interaction with the now-extinct predator. Furthermore, according to \cref{eq_meanEy}, the limiting mean prey population is always lower than the carrying capacity \(y=1\) in the deterministic system, with uncertainty consistently having a detrimental effect. This noise-induced reduction in the limiting mean appears to extend to the case of coexistence, as illustrated in \cref{fig_invHist1,fig_invHistA}.

The conditions provided in \cref{sec:extinction} for noise-induced extinctions are only sufficient, meaning we cannot offer `safe' ranges for noise intensities that would guarantee coexistence. An exception lies in the cases explored in \cref{sec_invariant}, where we present conditions for the existence of an invariant distribution, ensuring coexistence for all time with probability one. We do not claim that our conditions are sharp and note that the critical values for \(\sigma_x\) and \(\sigma_y\) for extinction according to \cref{thmExt_x} and \cref{thmExt_y} are well separated from the region defined by \cref{eq_condsxy1,eq_condsxy2}, where ergodicity is guaranteed. Our numerical simulations suggest that extinctions occur at noise intensities lower than those identified here, and that an ergodic solution is achieved under much less restrictive conditions than those presented in \cref{thmStatDist}. This state of affairs stems from the difficulty of finding the \textit{right} Lyapunov functions that ensure the application of available ergodic theory and leave plenty of room for future research.

\bibliographystyle{plainnat}
\bibliography{main}

\end{document}